\newcommand{\Asub}{{A_{c,\mathsmaller \blambda,\btau}^{-1}(\bt)}}
\renewcommand{\d}{\mathrm{d}}
\newcommand{\cF}{\mathcal{F}}
\newcommand{\cL}{\mathcal{L}}
\newcommand{\cT}{\mathcal{T}}
\newcommand{\var}[3]{\frac{\delta_{#1} {#2}}{\delta {#3}}}
\newcommand{\der}[2]{\frac{\partial {#1}}{\partial {#2}}}
\newcommand{\disc}{\mathrm{disc}}
\renewcommand{\mod}{\mathrm{mod}}
\newcommand{\miwa}{\mathrm{Miwa}}
\renewcommand{\O}{\mathcal{O}}
\newcommand{\C}{\mathbb{C}}
\newcommand{\R}{\mathbb{R}}
\newcommand{\Z}{\mathbb{Z}}
\newcommand{\q}[1]{ q^{\scriptscriptstyle [#1]} }
\newcommand{\Q}[1]{ Q^{\scriptscriptstyle [#1]} }
\newcommand{\bn}{\mathbf{n}}
\newcommand{\bt}{\mathbf{t}}
\newcommand{\blambda}{\boldsymbol{\lambda}}
\newcommand{\btau}{\boldsymbol{\tau}}
\newcommand{\fe}{\mathfrak{e}}
\newcommand{\fv}{\mathfrak{v}}
\DeclareMathOperator{\pr}{pr}
\DeclareMathOperator{\sgn}{sgn}
\DeclareMathOperator{\D}{D}
\newtheorem{thm}{Theorem}
\newtheorem{prop}[thm]{Proposition}
\newtheorem{lemma}[thm]{Lemma}
\theoremstyle{definition}
\newtheorem{defi}[thm]{Definition}
\title{Continuum limits of \\ pluri-Lagrangian systems \vspace{1cm}}
\author{Mats Vermeeren}
\date{\normalsize \today}
\affil{\normalsize \textit{Institut f\"ur Mathematik, MA 7-1, Technische Universit\"at Berlin, \\
Str.\@ des 17.\@ Juni 136, 10623 Berlin, Germany} \\
\texttt{vermeeren@math.tu-berlin.de}}
\date{}
\begin{document}

\maketitle

\begin{abstract}
\noindent
A pluri-Lagrangian (or Lagrangian multiform) structure is an attribute of integrability that has mainly been studied in the context of multidimensionally consistent lattice equations. It unifies multidimensional consistency with the variational character of the equations. An analogous continuous structure exists for integrable hierarchies of differential equations. We present a continuum limit procedure for pluri-Lagrangian systems. In this procedure the lattice parameters are interpreted as Miwa variables, describing a particular embedding in continuous multi-time of the mesh on which the discrete system lives. Then we seek differential equations whose solutions interpolate the embedded discrete solutions. The continuous systems found this way are hierarchies of differential equations. We show that this continuum limit can also be applied to the corresponding pluri-Lagrangian structures. We apply our method to the discrete Toda lattice and to equations H1 and Q1$_{\delta = 0}$ from the ABS list.
\end{abstract}

\vspace{1cm}

\section{Introduction}

A cornerstone of the theory of integrable systems is the idea that integrable equations come in families of compatible equations. In the continuous case these are hierarchies of differential equations with commuting flows. In the discrete case, in particular in the context of equations on quadrilateral graphs (\emph{quad equations}) this property is known as \emph{multidimensional consistency}. A classification of multidimensionally consistent quad equations was found by Adler, Bobenko, and Suris \cite{adler2003classification} and is often referred to as the ABS list. Additionally, many integrable equations can be derived from a variational principle. The \emph{Lagrangian multiform} or \emph{pluri-Lagrangian} formalism, which grew out of a beautiful insight by Lobb and Nijhoff \cite{lobb2009lagrangian}, combines these two aspects of integrability.%
\footnote{
	The author prefers the term ``pluri-Lagrangian'' over ``Lagrangian multiform'' because it is not the differential form that has a 	multiplicity or plurality to it, but rather its interpretation as a Lagrangian. 
	There is a minor distinction in how both names have been used in the literature: ``pluri-Lagrangian'' indicates that solutions are critical with respect to variations of the dependent variable on any fixed surface \cite{bobenko2015discrete,boll2014integrability,suris2016lagrangian}, whereas ``Lagrangian multiform'' is mostly used when one also requires criticality with respect to variations in the geometry of the surface \cite{king2017quantum,lobb2009lagrangian,xenitidis2011lagrangian,yoo2011discrete}. This distinction is not relevant to the present work.
}
The central idea in this notion is that action sum or integral can be taken on an arbitrary surface in a higher dimensional space. 

The discrete version of the the pluri-Lagrangian theory is more developed than the continuous one, and arguably more fundamental. Hence, connecting both sides could lead to a better understanding of the continuous theory. It is known that the lattice parameters of a discrete pluri-Lagrangian system may play the role of independent variables in a corresponding continuous system of non-autonomous differential equations, see e.g.\@ \cite{lobb2009lagrangian,xenitidis2011lagrangian}. This paper presents a different connection between discrete and continuous pluri-Lagrangian systems, where the continuous variables interpolate the discrete ones. The lattice parameters describe the size and shape of the mesh on which the discrete system lives, and thus they disappear in the continuum limit. The continuous systems found this way are hierarchies of autonomous differential equations. Pluri-Lagrangian structures for such hierarchies were studied independently of the discrete case in \cite{suris2016lagrangian}.

Some similar continuum limits can be found in the literature, for example in \cite{morosi1996continuous,morosi1998continuous,morosi1998continuous2,yoo2011discrete} and in particular in \cite{wiersma1987lattice}, where the lattice potential KdV equation is shown to produce the potential KdV hierarchy in a suitable limit. On the level of the pluri-Lagrangian structure, the problem is essentially that of interpolation of discrete variational systems by continuous Lagrangian systems. This was studied in \cite{vermeeren2015modified} because of its relevance in numerical analysis, in particular for backward error analysis of variational integrators. We will build on the ideas from that work to construct pluri-Lagrangian structures for hierarchies of differential equations that appear as continuum limits of lattice equations.

Section \ref{sec-pluri} contains a crash course on discrete and continuous pluri-Lagrangian systems. Section \ref{sec-miwa} provides an introduction to Miwa variables, which turn out to be a powerful tool for taking continuum limits. In fact, on the level of equations, this is the only tool required to obtain a continuous system. However, that leaves the question whether the resulting differential equations are integrable. In Section \ref{sec-laglim} we look at the Lagrangian side of the continuum limit; first we review the method of \cite{vermeeren2015modified} to take continuum limits of classical Lagrangian systems (regardless of their integrability), then we extend these ideas to pluri-Lagrangian systems. By recovering a pluri-Lagrangian structure on the continuous side, the question about integrability of the limit is settled in the affirmative. In Section \ref{sec-examples} we study several examples in detail.

\section{Pluri-Lagrangian systems}
\label{sec-pluri}

\subsection{Discrete pluri-Lagrangian systems}

Consider the lattice $\Z^N$ with basis vectors $\fe_1,\ldots,\fe_N$. To each lattice direction we associate a parameter $\lambda_i \in \C$. The equations we are interested in involve the values of a field $U: \Z^N \rightarrow \C$ on elementary squares in this lattice, or more generally, on $d$-dimensional \emph{plaquettes}. Such a plaquette is a $2^d$-tuple of lattice points that form an elementary hypercube. We denote it by
\[ \square_{i_1,\ldots,i_d}(\bn) = \left\{ \bn + \varepsilon_1 \fe_{i_1} + \ldots + \varepsilon_d \fe_{i_d} \,\Big|\, \varepsilon_k \in \{0,1\} \right\} \subset \Z^N, \]
where $\bn = (n_1, \ldots, n_N)$. Plaquettes are considered to be oriented; an odd permutation of the directions $i_1,\ldots,i_d$ reverses the orientation of the plaquette. We will write $ U(\square_{i_1,\ldots,i_d}(\bn))$ for the $2^d$-tuple
\[ U(\square_{i_1,\ldots,i_d}(\bn)) = \Big( U(\bn), U(\bn + \fe_{i_1}), U(\bn + \fe_{i_2}),  \ldots, U(\bn + \fe_{i_1} \ldots + \fe_{i_d}) \Big) . \]
Occasionally we will also consider the corresponding ``filled-in'' hypercubes in $\R^N$,
\[ \blacksquare_{i_1,\ldots,i_d}(\bn) = \left\{ \bn + \alpha_1 \fe_{i_1} + \ldots + \alpha_d \fe_{i_d} \,\Big|\, \alpha_k \in [0,1] \right\} \subset \R^N , \]
on which we consider the orientation defined by the volume form $\d t_{i_1} \wedge \ldots \wedge \d t_{i_d}$.

The role of a Lagrange function is played by a discrete $d$-form 
\[ L(U(\square_{i_1,\ldots,i_d}(\bn) ), \lambda_{i_1}, \ldots, \lambda_{i_d}), \] 
i.e.\@ a function of the values of the field $U: \Z^N \rightarrow \C$ on a plaquette and of the corresponding lattice parameters, where 
\[ L \!\left( U \!\left( \square_{\sigma(i_1),\ldots,\sigma(i_d)}(\bn) \right), \lambda_{\sigma(i_1)}, \ldots, \lambda_{\sigma(i_d)} \right) = \sgn(\sigma) L(\square_{i_1,\ldots,i_d}(\bn), \lambda_{i_1}, \ldots, \lambda_{i_d}) \]
for any permutation $\sigma$ of $i_1, \ldots, i_d$. 

Consider a discrete $d$-surface $\Gamma = \{\square_\alpha\}$ in the lattice, i.e.\@ a set of $d$-dimensional plaquettes, indexed by a parameter $\alpha$, such that the union of the corresponding filled-in plaquettes  $\bigcup_\alpha \blacksquare_\alpha$ is an oriented topological $d$-manifold (possibly with boundary). The action over $\Gamma$ is given by
\begin{equation}\label{discact}
S_\Gamma = \sum_{\square_{i_1,\ldots,i_d}(\bn) \in \Gamma} L(U(\square_{i_1,\ldots,i_d}(\bn)) ,\lambda_{i_1}, \ldots, \lambda_{i_d}) .
\end{equation}
The field $U$ is a solution to the \emph{pluri-Lagrangian problem} if it is a critical point of $S_\Gamma$ (with respect to variations that are zero on the boundary of $\Gamma$) for all discrete $d$-surfaces $\Gamma$ simultaneously.

\begin{figure}[t]
\centering
\begin{tikzpicture}[scale=(0.75)]
\def\a{.8}
\def\b{.6}
\newcommand{\tikzsquare}[4]{
	\ifthenelse{#4=1} {
	\filldraw[fill=black!50!, draw=black, fill opacity=0.8]  (#1+\a*#3,#2+\b*#3) -- (#1+\a*#3+\a,#2+\b*#3+\b) -- (#1+\a*#3+\a,#2+\b*#3+1+\b) -- (#1+\a*#3,#2+\b*#3+1) -- cycle;
	} {}
	\ifthenelse{#4=2} {
	\filldraw[fill=black!10!, draw=black, fill opacity=0.8] (#1+\a*#3,#2+\b*#3) -- (#1+\a*#3+\a,#2+\b*#3+\b) -- (#1+\a*#3+1+\a,#2+\b*#3+\b) -- (#1+\a*#3+1,#2+\b*#3) -- cycle;
	} {}
	\ifthenelse{#4=3} {
	\filldraw[fill=black!30!, draw=black, fill opacity=0.8](#1+\a*#3,#2+\b*#3) -- (#1+\a*#3,#2+\b*#3+1) -- (#1+\a*#3+1,#2+\b*#3+1) -- (#1+\a*#3+1,#2+\b*#3) -- cycle;
	} {}
}

\tikzsquare{-1}{0}{0}{2};
\tikzsquare{-1}{0}{1}{2};
\tikzsquare{-1}{0}{2}{2};
\tikzsquare{-1}{0}{3}{2};
\tikzsquare{0}{0}{0}{2};
\tikzsquare{0}{0}{3}{2};
\tikzsquare{1}{0}{0}{2};
\tikzsquare{1}{0}{1}{2};
\tikzsquare{1}{0}{3}{2};
\tikzsquare{2}{0}{0}{2};
\tikzsquare{2}{0}{1}{2};
\tikzsquare{2}{0}{2}{2};
\tikzsquare{2}{0}{3}{2};

\tikzsquare{0}{0}{1}{1};
\tikzsquare{0}{0}{2}{1};
\tikzsquare{1}{1}{2}{1};
\tikzsquare{0}{0}{3}{3};
\tikzsquare{1}{0}{3}{3};
\tikzsquare{1}{1}{3}{3};

\tikzsquare{0}{0}{1}{3};
\tikzsquare{0}{1}{1}{2};
\tikzsquare{0}{1}{2}{2};
\tikzsquare{1}{0}{1}{1};
\tikzsquare{1}{0}{2}{3};
\tikzsquare{1}{1}{2}{3};
\tikzsquare{1}{2}{2}{2};
\tikzsquare{2}{0}{2}{1};
\tikzsquare{2}{1}{2}{1};
\end{tikzpicture}
\caption{Visualization of a discrete $2$-surface in $\mathbb{Z}^3$.}
\end{figure}

For $d = 1$, Equation \eqref{discact} reads
\[ S_\Gamma = \sum_{\{\bn, \bn + \fe_i\} \in \Gamma} L( U(\bn), U(\bn + \fe_i),\lambda_i) . \]
The Euler-Lagrange equations at general elementary corners,
\[ \der{}{U(\bn)} \Big( L( U(\bn \pm \fe_i), U(\bn),\lambda_i) + L( U(\bn), U(\bn \pm \fe_j),\lambda_j) \Big) = 0, \]
are necessary and sufficient conditions for $U$ to be a solution to the pluri-Lagrangian problem.

For $d = 2$, Equation \eqref{discact} becomes
\[ S_\Gamma = \sum_{\{\bn, \bn + \fe_i, \bn + \fe_j, \bn + \fe_i + \fe_j\} \in \Gamma} L( U(\bn), U(\bn + \fe_i), U(\bn + \fe_j), U(\bn + \fe_i + \fe_j),\lambda_i,\lambda_j) . \]
Since every surface can be constructed out of corners of cubes, it is sufficient to determine the Euler-Lagrange equations on these elementary building blocks. They are
\begin{alignat*}{3}
\der{}{U} \Big( && L( U, U_i, U_j, U_{ij},\lambda_i,\lambda_j) 
&+ L( U, U_j, U_k, U_{jk},\lambda_j,\lambda_k) \\[-2mm]
&& &+ L( U, U_k, U_i, U_{ik},\lambda_k,\lambda_i) & \Big) &= 0 , 
\\
\der{}{U_i} \Big( && L( U, U_i, U_j, U_{ij},\lambda_i,\lambda_j)
&- L( U_i, U_{ij}, U_{ik}, U_{ijk},\lambda_j,\lambda_k) \\[-2mm]
&& &+ L( U, U_k, U_i, U_{ik},\lambda_k,\lambda_i) & \Big) &= 0 , 
\\
\der{}{U_{ij}} \Big( && L( U, U_i, U_j, U_{ij},\lambda_i,\lambda_j)
&- L( U_i, U_{ij}, U_{ik}, U_{ijk},\lambda_j,\lambda_k) \\[-2mm]
&& &- L( U_j, U_{jk}, U_{ij}, U_{ijk},\lambda_k,\lambda_i) & \Big) &= 0 , 
\\
\der{}{U_{ijk}} \Big( && - L( U_k, U_{ik}, U_{jk}, U_{ijk},\lambda_i,\lambda_j) 
&- L( U_i, U_{ij}, U_{ik}, U_{ijk},\lambda_j,\lambda_k) \\[-2mm]
&& &- L( U_j, U_{jk}, U_{ij}, U_{ijk},\lambda_k,\lambda_i) & \Big) &= 0 .
\end{alignat*}
These corner equations are necessary and sufficient conditions for $U$ to be a solution to the pluri-Lagrangian problem. Often, $L$ can be written in a three-leg form
\[ L(U, U_i, U_j, U_{ij}, \lambda_i,\lambda_j) = A(U, U_i, \lambda_i) - A(U, U_j, \lambda_j) + B(U_i, U_j, \lambda_i - \lambda_j) , \]
which renders the first and last corner equations trivial. In particular, this is the case for all equations from the ABS list.

For more details, we refer to \cite{lobb2009lagrangian}, \cite{boll2014integrability}, \cite[Chapter 12]{hietarinta2016discrete}, and the references therein.

\subsection{Continuous pluri-Lagrangian systems}

In the continuous case, the lattice is replaced by a space $\R^N$, which we refer to as \emph{multi-time}. The Lagrangian in this context is a differential $d$-form
\begin{equation}\label{d-form}
\cL = \sum_{1 \leq i_1<\ldots<i_d \leq N} \cL_{i_1,\ldots,i_d}[u] \, \d t_{i_1} \wedge \ldots \wedge \d t_{i_d} ,
\end{equation}
where the square brackets denote dependence on the field $u: \R^N \rightarrow \C$ and an arbitrary number of its partial derivatives. We will always use lower case letters to denote continuous fields, as opposed to the upper case letters used for discrete fields. The field $u$ solves the \emph{pluri-Lagrangian problem} if for any $d$-dimensional submanifold $\Gamma$ of $\R^N$ it is a critical point of the action
\[ S_\Gamma = \int_\Gamma \cL \]
with respect to variations that are zero near the boundary of $\Gamma$. An infinite hierarchy of integrable differential equations is described by formal $d$-form in infinite dimensions,
\[ \cL = \sum_{1 \leq i_1<\ldots<i_d < \infty} \cL_{i_1,\ldots,i_d}[u] \, \d t_{i_1} \wedge \ldots \wedge \d t_{i_d} , \]
where for any $N \in \mathbb{N}$ its restriction \eqref{d-form} to $\R^N$ is a pluri-Lagrangian $d$-form for the part of the hierarchy involving derivatives with respect to $t_1,\ldots,t_N$ only.

The \emph{multi-time Euler-Lagrange equations}, which characterize solutions to the pluri-Lagrangian problem, were derived in \cite{suris2016lagrangian} for $d=1$ and $d=2$. The main idea of that derivation is to approximate any given smooth $d$-surface by a \emph{stepped surface}, a piecewise flat surface, the pieces of which are shifted sections of coordinate planes. Analogous to the discrete case, it is sufficient to look at the elementary building blocks of stepped surfaces.
\begin{figure}[t]
\centering
\begin{tikzpicture}[scale=(0.5)]
\draw[thick] (0,0) -- (1,0) -- (1,2) -- (3,2) -- (5,4) -- (5,6) -- (4,5);
\draw[white, line width=1mm] (4.5,5) -- (5.5,5);
\draw[thick] (4,5) -- (7,5) -- (7,3) -- (9,3) -- (7,1);
\end{tikzpicture}
\hspace{1cm}
\begin{tikzpicture}[scale=(0.5)]
\fill[black!5!] (-.5,-.5) -- (1.9,-.5) -- (1.9,1.9) -- (-.5,1.9);
\draw[->] (0,0) -- (1,0);
\draw[->] (0,0) -- (0,1);
\draw[->] (0,0) -- (.72,.72);
\node at (1.4,0) {$t_1$};
\node at (1,1) {$t_2$};
\node at (0,1.4) {$t_3$};
\end{tikzpicture}
\hspace{1cm}
\begin{tikzpicture}[scale=(0.3)]
\filldraw[fill=black!50!, draw=black] (-3,-3) -- (-3,3) -- (2,8) -- (2,5) -- (0,3) -- (0,0) -- cycle;
\filldraw[fill=black!50!, draw=black] (6,0) -- (6,3) -- (8,5) -- (8,2) -- cycle;

\filldraw[fill=black!30!, draw=black] (0,3) -- (6,3) -- (6,0) -- (0,0) -- cycle;
\filldraw[fill=black!30!, draw=black] (-3,-3) -- (-5,-3) -- (-5,3) -- (-3,3) -- cycle;

\filldraw[fill=black!10!, draw=black] (-3,-3) -- (5,-3) -- (10,2) -- (8,2) -- (6,0) -- (0,0) -- cycle;
\filldraw[fill=black!10!, draw=black] (0,8) -- (-5,3) -- (-3,3) -- (2,8) -- cycle;
\filldraw[fill=black!10!, draw=black] (0,3) -- (2,5) -- (8,5) -- (6,3) -- cycle;
\end{tikzpicture}
\caption{A stepped curve (left) and a stepped $2$-surface (right) in $\R^3$}
\end{figure}

In order to state the multi-time Euler-Lagrange equations we introduce a multi-index notation for partial derivatives. An \emph{$N$-index} $I$ is an $N$-tuple of non-negative integers. There is a natural bijection between $N$-indices and partial derivatives of $u:\R^N \rightarrow \C$. We denote by $u_I$ the mixed partial derivative of $u$, where the number of derivatives with respect to each $t_i$ is given by the entries of $I$. Note that if $I = (0,\ldots,0)$,  then $u_I = u$. 

We will often denote a multi-index suggestively by a string of $t_i$-variables, but it should be noted that this representation is not always unique. For example,
\[ t_1 = (1,0,\ldots,0), \qquad t_N = (0,\ldots,0,1), \qquad t_1 t_2 = t_2 t_1 = (1,1,0,\ldots,0) . \]
In this notation, we will also make use of exponents to compactify the expressions, for example
\[ t_2^3 = t_2 t_2 t_2 = (0,3,0,\ldots,0). \]
The notation $I t_j$ should be interpreted as concatenation in the string representation, hence it denotes the multi-index obtained from $I$ by increasing the $j$-th entry by one. Finally, if the $j$-th entry of $I$ is nonzero we say that $I$ contains $t_j$, and write $I \ni t_j$.

For $d = 1$ the multi-time Euler-Lagrange equations are
\begin{subequations}
\begin{alignat}{3}
& \var{i}{\cL_i}{u_I} = 0 & \forall I \not\ni t_i ,  \label{EL11}\\
& \var{i}{\cL_i}{u_{It_i}} = \var{j}{\cL_j}{u_{It_j}} & \forall I , \label{EL12}
\end{alignat}
\end{subequations}
where $\var{i}{}{u_I}$ denotes a variational derivative in the $t_i$-direction,
\begin{align*}
\var{i}{}{u_I} &= \sum_{k = 0}^\infty (-1)^k \D_{t_i}^k \der{}{u_{It_i^k}} 
= \der{}{u_I} - \D_{t_i} \der{}{u_{It_i}} + \D_{t_i}^2 \der{}{u_{It_it_i}} - \ldots ,
\end{align*}
and $\D_{t_i} = \frac{\d}{\d t_i}$. Equation \eqref{EL11} is obtained from the straight parts of a stepped curve, Equation \eqref{EL12} from the corners.

For $d = 2$ the multi-time Euler-Lagrange equations are
\begin{subequations}
\begin{alignat}{3}
& \var{ij}{\cL_{ij}}{u_I} = 0 & \forall I \not\ni t_i, t_j , \label{EL21} \\
& \var{ij}{\cL_{ij}}{u_{It_j}} = \var{ik}{\cL_{ik}}{u_{It_k}} & \forall I \not\ni t_i , \label{EL22} \\
& \var{ij}{\cL_{ij}}{u_{It_it_j}} + \var{jk}{\cL_{jk}}{u_{It_jt_k}} + \var{ki}{\cL_{ki}}{u_{It_kt_i}} = 0 & \forall I , \label{EL23}
\end{alignat}
\end{subequations}
where
\[ \var{ij}{}{u_I} = \sum_{k = 0}^\infty \sum_{\ell = 0}^\infty (-1)^{k+\ell} \D_{t_i}^k \D_{t_j}^\ell \der{}{u_{I t_i^k t_j^\ell}} .\]
Equation \eqref{EL21} is obtained from the flat pieces of a stepped surface, Equation \eqref{EL22} from the edges, and  Equation \eqref{EL23} from the corners.

Note that there is no analogue of the lattice parameters in the continuous pluri-Lagrangian framework, but of course it is possible to consider parameter-dependent Lagrangians in the continuous case as well. One way of connecting the discrete and continuous cases is to consider the lattice parameters as independent variables of the continuous system and the discrete independent variables as parameters in the continuous system. This leads to a parameter-dependent non-autonomous PDE, known as the \emph{generating PDE}, which is discussed for example in \cite{lobb2009lagrangian}, \cite{nijhoff2000schwarzian} and \cite{xenitidis2011lagrangian}. We will briefly come back to it at the end of this paper.

The main goal of this work is to present a continuum limit procedure for pluri-Lagrangian systems. Instead of switching the roles of parameters and independent variables, we assume that the discrete system lives on a mesh embedded in $\R^N$, which is described by the lattice parameters. We then seek a continuous system which interpolates the lattice system.

\section{Miwa variables}
\label{sec-miwa}

To motivate our approach to the continuum limit, we start by considering the opposite direction.%
\footnote{The author is grateful to Yuri Suris for suggesting the motivation presented here.}
The problem of integrable discretization has been studied at impressive length in the monograph \cite{suris2003problem}. Let us briefly summarize the ``recipe'' for discretizing Toda-type systems from Section 2.9 of that work. It starts from an integrable ODE with a Lax representation of the form 
\begin{equation}\label{lax-cont}
 L_{t} = \left[ L, \pi_+(f(L)) \right]
\end{equation}
in a matrix Lie algebra $\mathfrak{g} = \mathfrak{g}_+ \oplus \mathfrak{g}_-$, where $\pi_+$ denotes projection onto $\mathfrak{g}_+$. Here $L$ denotes the Lax matrix, not to be confused with a Lagrangian, and $f:\mathfrak{g} \rightarrow \mathfrak{g}$ is an $\mathrm{Ad}$-covariant function. Such an equation is part of an integrable hierarchy, given by
\begin{equation}\label{lax-hier}
L_{t_k} = \left[ L, \pi_+ \big( f(L)^k \big) \right].
\end{equation}
A related integrable difference equation can be formulated in the corresponding Lie group $G$, with subgroups $G_+$ and $G_-$ having Lie algebras $\mathfrak{g}_+$ and $\mathfrak{g}_-$ respectively. Any element $x \in G$ close to the unit $\mathrm{Id} \in G$ can be factorized as $x = \Pi_+(x) \Pi_-(x)$, where $\Pi_\pm(x) \in G_\pm$. The difference equation is given by
\begin{equation}\label{lax-disc}
\widetilde{L} = \Pi_+(F(L))^{-1} \, L \, \Pi_+(F(L)),
\end{equation}
where the tilde $\widetilde{\cdot}$ denotes a discrete time step and 
\[ F(L) = \mathrm{Id} + \lambda f(L) \]
for some small parameter $\lambda$.

Solutions of the differential equation \eqref{lax-cont} are given by
\[ L(t) = \Pi_+\!\left( e^{t f(L_0)} \right)^{-1} L_0 \, \Pi_+\!\left( e^{t f(L_0)} \right) . \]
A simultaneous solution to the whole hierarchy \eqref{lax-hier} takes the form
\begin{equation}\label{recipe-c}
L(t_1,t_2,\ldots) = \Pi_+\!\left( e^{t_1 f(L_0) + t_2 f(L_0)^2 + \ldots} \right)^{-1} L_0 \, \Pi_+\!\left( e^{t_1 f(L_0) + t_2 f(L_0)^2 + \ldots} \right) .
\end{equation}
A solution of the discretization \eqref{lax-disc} is given by
\begin{align}
L(n) 
&= \Pi_+\!\left( F^n(L_0) \right)^{-1} L_0 \, \Pi_+\!\left( F^n(L_0) \right) \notag\\
&= \Pi_+\!\left( e^{n\log( 1 + \lambda f(L_0) ) } \right)^{-1} L_0 \, \Pi_+\!\left( e^{n\log( 1 + \lambda f(L_0) ) } \right) \notag \\
&= \Pi_+\!\left( e^{n \lambda f(L_0) - \frac{n}{2} \lambda^2 f(L_0)^2 + \ldots } \right)^{-1} L_0 \, \Pi_+\!\left( e^{n \lambda f(L_0) - \frac{n}{2} \lambda^2 f(L_0)^2 + \ldots } \right)\label{recipe-d} .
\end{align}
Comparing equations \eqref{recipe-c} and \eqref{recipe-d}, it is natural to identify a discrete step $n \mapsto n+1$ with a time shift
\[ (t_1, t_2, \ldots, t_i, \ldots) \mapsto \left(t_1 + \lambda, t_2 - \frac{\lambda^2}{2}, \ldots, t_i + (-1)^{i+1} \frac{\lambda^i}{i}, \ldots \right) . \]
This gives us a map from the discrete space $\Z^N(n_1, \ldots,n_N)$ into the continuous multi-time $\R^N(t_1,\ldots,t_N)$. We associate a parameter $\lambda_i$ with each lattice direction and set
\[ t_i = (-1)^{i+1} \left( n_1 \frac{\lambda_1^i}{i} + \ldots + n_N \frac{\lambda_N^i}{i} \right). \]
Note that a single step in the lattice (changing one $n_j$) affects all the times $t_i$, hence we are dealing with a very skew embedding of the lattice. We will also consider a slightly more general correspondence,
\begin{equation}\label{miwa}
t_i = (-1)^{i+1} \left(n_1 \frac{c \lambda_1^i}{i} + \ldots + n_N \frac{c \lambda_N^i}{i} \right) + \tau_i,
\end{equation}
for constants $c, \tau_1, \ldots, \tau_N$ describing a scaling and a shift of the lattice. The variables $n_j$ and $\lambda_j$ are known in the literature as Miwa variables and have their origin in \cite{miwa1982hirota}. In the present work we will call the $n_j$ \emph{discrete coordinates}, the $\lambda_j$ \emph{lattice parameters} and the  $t_i$ \emph{continuous coordinates} or \emph{times}. We will call Equation \eqref{miwa} the \emph{Miwa correspondence}. Let $\blambda = (\lambda_1, \ldots, \lambda_N)$ and consider the $N \times N$ matrix
\[ M_{\blambda} = \left( (-1)^{i+1} \frac{\lambda_j^i}{i} \right)_{i,j = 1}^N . \]
Then we can write the Miwa correspondence as
\[ \bt = c M_{\blambda} \bn + \btau, \]
where $\bt = (t_1, \ldots, t_N)^T$, $\bn = (n_1, \ldots, n_N)^T$, and $\btau = (\tau_1, \ldots, \tau_N)^T$.
In other words, we consider the mesh $\Z^N$ under the affine transformation
\begin{equation}\label{miwaA}
A_{c,\blambda,\btau} : \R^N \rightarrow \R^N: \bt \mapsto c M_{\blambda} \bt + \btau.
\end{equation}

We will use the Miwa correspondence \eqref{miwa}--\eqref{miwaA} even if the discrete system is not generated by the recipe described above. In many cases one can justify this in a similar way by considering \emph{plane wave factors}, solutions of the linearized system. For more on this perspective, see e.g.\@ \cite{nijhoff1995discrete,nijhoff1983direct,wiersma1987lattice} and \cite[Chapter 5]{hietarinta2016discrete}. 

For a completely different motivation for Miwa variables, note that for $N$ distinct parameter values $\lambda_1,\ldots,\lambda_N$ the corresponding vectors
\[ \nu(\lambda) = \left(c\lambda, -\frac{c\lambda^2}{2}, \ldots, - (-1)^N \frac{c\lambda^N}{N} \right) \]
are linearly independent. Up to projective transformations, $\nu$ is the only curve with that property. It is known as the \emph{rational normal curve} \cite{harris1992algebraic}.

To perform the continuum limit of a difference equation involving $U: \Z^N \rightarrow \C$, we associate to it a function $u: \R^N \rightarrow \C$ that interpolates it:
\[ U(\bn) = u( A_{c,\blambda,\btau}(\bn)) \qquad \forall \bn \in \Z^N, \]
where $A_{c,\blambda,\btau}$ is the Miwa embedding as given by Equation \eqref{miwaA}. We denote the shift of $U$ in the $i$-th lattice direction by $U_i$. If $U(\bn) = u(t_1,\ldots,t_N)$, it is given by
\[
U_i = U(\bn + \fe_i) =
u \!\left( t_1 + c \lambda_i, t_2 - \frac{c \lambda_i^2}{2}, \ldots, t_n - (-1)^N \frac{c \lambda_i^N}{N} \right), \]
which we can expand as a power series in $\lambda_i$. The difference equation thus turns into a power series in the lattice parameters. If all goes well, its coefficients will define differential equations that form an integrable hierarchy. Every solution of the continuous hierarchy will interpolate solutions of the discrete system, but there might be discrete solutions that do not correspond to any continuous solution. Examples can be found in Section \ref{sec-examples}.

Note that such a procedure is strictly speaking not a continuum \emph{limit}; sending $\lambda_i \rightarrow 0$ would only leave the leading order term of the power series. A more precise formulation is that the continuous $u$ interpolates the discrete $U$ for sufficiently small values of $\lambda_i$, where $U$ is defined on a mesh that is embedded in $\R^N$ using the Miwa correspondence. Since $\lambda_i$ is still assumed to be small, it makes sense to think of the outcome as a limit, but it is important to keep in mind that higher order terms should not be disregarded.

\section{Continuum limits of Lagrangian forms}
\label{sec-laglim}

\subsection{Modified Lagrangians in the classical variational problem}
\label{sec-straight-limit}

In \cite{vermeeren2015modified} we performed a continuum limit on Lagrangian systems in the context of variational integrators for ODEs. Given a discrete Lagrangian, we constructed a continuous \emph{modified Lagrangian} whose critical curves interpolate solutions of the discrete problem. A similar approach can be used in the context of pluri-Lagrangian systems, but first we present the relevant ideas in the context of the classical variational formulation of a P$\Delta$E, where $N = d$. We use an orthogonal lattice with a fixed mesh size $h$. In Section \ref{sec-pluri-limit} we will consider the pluri-Lagrangian problem and embed the lattice according to the Miwa correspondence.

In the classical discrete variational principle we consider elementary plaquettes of full dimension, so it is sufficient to label them only by position, leaving out the subscripts denoting the direction. We consider Lagrangians $L_\disc(\square(\bn), h)$ depending on the values of the field $U: \Z^d \rightarrow \C$ on a plaquette $\square(\bn)$ and on the mesh size $h$. As before, we denote lattice shifts by subscripts:
\[ U =  U(\bn), \qquad U_i = U( \bn + \fe_{i}), \qquad U_{-i} = U( \bn - \fe_{i}), \qquad U_{ij} = U( \bn + \fe_{i} + \fe_{j}), \quad \cdots . \]

We identify points of a discrete solution with mesh size $h$ with evaluations of an interpolating field $\widetilde{u}: \R^d \rightarrow \C$,
\begin{equation}\label{orthogonal}
 U(n_1,\ldots,n_d) = \widetilde{u}(h n_1, \ldots, h n_d) .
\end{equation}
Using a Taylor expansion we can write the discrete Lagrangian $L_\disc(\square(\bn), h)$ as a function of the interpolating field $\widetilde{u}$ and its derivatives. This defines $\widetilde{\cL}_\disc$,
\begin{align*}
&\widetilde{\cL}_\disc([\widetilde{u}], h) = L_\disc\!\left( \bigg\{ \widetilde{u} + h \sum_{k=1}^d \varepsilon_k \, \widetilde{u}_{t_k} + \frac{h^2}{2} \sum_{k=1}^d \sum_{\ell=1}^d \varepsilon_k \varepsilon_\ell \, \widetilde{u}_{t_k t_\ell} + \ldots \ \bigg|\ \varepsilon_k \in \{0,1\} \bigg\}, h \right) ,
\end{align*} 
where the square brackets denote dependence on $\widetilde{u}$ and any number of its partial derivatives.

So far we have only written the discrete Lagrangian as a function of the continuous field. The corresponding action is still a sum,
\begin{align*}
S(U,h) 
&= \sum_{\bn \in \Z^d} L_\disc(U(\square(\bn)),h)  
= \sum_{\bn \in \Z^d} \widetilde{\cL}_\disc([\widetilde{u}(\bn)], h) ,
\end{align*} 
which is why we sill use the subscript $\disc$. We want to write the action as an integral. This can be done using the Euler-Maclaurin formula, which relates sums to integrals \cite[Eq. 23.1.30]{abramowitz1964handbook}:
\begin{align*} 
\sum_{k=0}^{m-1} F(a+kh) 
&= \frac{1}{h} \int_a^{a+mh} F(t) \,\d t + \sum_{i=1}^\infty h^{i-1}\frac{ B_i}{i!} \left( F^{(i-1)}(a+mh) - F^{(i-1)}(a) \right) \\
&= \frac{1}{h} \int_a^{a+mh} \left( \sum_{i=0}^\infty h^i \frac{B_i}{i!} F^{(i)}(t) \right) \d t,
\end{align*} 
where $B_i$ denote the Bernoulli numbers $1, -\frac{1}{2}, \frac{1}{6}, 0, -\frac{1}{30},0,\cdots$. For any given field $\widetilde{u}$ and lattice parameter $h$ we apply the Euler-Maclaurin formula to $\widetilde{\cL}_\disc([\widetilde{u}], h)$ in each of the lattice directions. We obtain the \emph{modified Lagrangian}
\[\cL_\mod([\widetilde{u}],h) = \sum_{i_1,\ldots,i_d=0}^\infty h^{i_1 + \ldots + i_d} \frac{B_{i_1} \ldots B_{i_d}}{i_1! \ldots i_d!} \D_{t_1}^{i_1} \ldots \D_{t_d}^{i_d} \widetilde{\cL}_\disc([\widetilde{u}],h). \]
The power series in the Euler-Maclaurin Formula generally does not converge. The same is true for the series defining $\cL_\mod$. As a formal power series, it satisfies
\begin{equation}\label{meshed-action}
S(U,h) = \int_{\R^d} \cL_\mod([\widetilde{u}(\bt)],h) \,\d\bt ,
\end{equation}
where $\d \bt = \d t_1 \wedge \ldots \wedge \d t_d$ and $U$ is related to $\widetilde{u}$ by Equation \eqref{orthogonal}. This property also holds locally,
\begin{equation}\label{meshed}
 L_\disc(U(\square(\bn)),h) = \int_{\blacksquare(\bn)} \cL_\mod([\widetilde{u}(\bt)],h) \,\d\bt.
\end{equation}
To make these statements precise, we introduce a truncated version of the modified Lagrangian, which only includes powers of $h$ up to a certain degree $k$,
\[\cL_{\mod,k}([\widetilde{u}],h) = \sum_{i_1+\ldots+i_d\leq k} h^{i_1 + \ldots + i_d} \frac{B_{i_1} \ldots B_{i_d}}{i_1! \ldots i_d!} \D_{t_1}^{i_1} \ldots \D_{t_d}^{i_d} \widetilde{\cL}_\disc([\widetilde{u}],h). \]
With $\cL_{\mod,k}$ instead of $\cL_\mod$, Equations \eqref{meshed-action} and \eqref{meshed} hold with a defect of order $\O(h^{k+1})$.

\subsection{From discrete to continuous pluri-Lagrangian structures}
\label{sec-pluri-limit}

In the pluri-Lagrangian context we consider a discrete Lagrangian $d$-form in a higher dimensional lattice $\Z^N$, $N > d$. Consider $N$ pairwise distinct lattice parameters $\blambda = (\lambda_1, \ldots, \lambda_N)$ and denote by $\fe_1, \ldots, \fe_N$ the unit vectors in the lattice $\mathbb{Z}^N$. The differential $c M_{\blambda} $ of the Miwa correspondence $A_{c,\blambda,\btau}$, introduced in Equation \eqref{miwaA}, maps them to linearly independent vectors in $\R^N$:
\[ \fe_i \mapsto \fv_i = c M_{\blambda} \fe_i = \left(  c \lambda_i, - \frac{c \lambda_i^2}{2}, \ldots , (-1)^{N+1} \frac{c \lambda_i^N}{N} \right)^T . \]

We calculate the modified Lagrangian in the transformed coordinate system, starting from a discrete Lagrangian $d$-form  $L_\disc$. For the Miwa embedding of the lattice, $\cL_\disc$ is defined by
\begin{align}
&\cL_\disc([u], \lambda_1, \ldots, \lambda_d) \Big|_\bt \notag\\
&= L_\disc\!\left( \left\{ u(\bt + \varepsilon_1 \fv_1 + \ldots + \varepsilon_d \fv_d) \ \Big|\ \varepsilon_k \in \{0,1\} \right\}, \lambda_1, \ldots, \lambda_d \right) \notag\\
&= L_\disc\!\left( \bigg\{ u + \sum_{k=1}^d \varepsilon_k \partial_k u + \frac{1}{2} \sum_{k=1}^d \sum_{\ell=1}^d \varepsilon_k \varepsilon_\ell \partial_k \partial_\ell u + \ldots \ \bigg|\ \varepsilon_k \in \{0,1\} \bigg\}, \lambda_1, \ldots, \lambda_d \right) \Bigg|_\bt, \label{Ldisc}
\end{align}
where now the differential operators correspond to the lattice directions under the Miwa correspondence,
\[ \partial_k = \sum_{j=1}^N (-1)^{j+1} \frac{c \lambda_k^j}{j} \D_{t_j}. \]
Interpreted as a vector field, $\partial_k = \fv_k$ is the pushforward by $A_{c,\mathsmaller \blambda,\btau}$ of $\D_{t_k} = \fe_k$.

\begin{figure}[t]
	\centering
	\begin{tikzpicture}[scale=.75]
	
	\draw[white] (0,0) -- (3,0) -- (3,4) -- (0,4) -- cycle;
	\foreach \x in {.5,...,2.5}
	\foreach \y in {.5,...,3.5} {
		\node at (\x,\y) {$\bullet$};
	}
	
	\node[above] at (4.5,1.8) {\Large $\xhookrightarrow{\qquad}$};
	
	\begin{scope}[xshift=6cm]
	\draw (0,0) -- (3,0) -- (3,4) -- (0,4) -- cycle;
	\foreach \x in {.5,...,2.5}
	\foreach \y in {.5,...,3.5} {
		\node at (\x,\y) {$\circ$};
	}
	\end{scope}
	
	\node[above] at (11.5,1.8) {\Large $\xrightarrow{A_{c,\boldsymbol{\lambda},\boldsymbol{\tau}}}$};
	
	\begin{scope}[xshift=13cm,x={(1,.2)}, y={(.5,.75)}]
	\draw (0,0) -- (3,0) -- (3,4) -- (0,4) -- cycle;
	\foreach \x in {.5,...,2.5}
	\foreach \y in {.5,...,3.5} {
		\node at (\x,\y) {$\circ$};
	}
	\end{scope}
	
	\node at (1.5,4.5) {$U : \mathbb{Z}^N \rightarrow \mathbb{C}$};
	\node at (7.5,4.5) {$\widetilde{u} = u \circ A_{c,\boldsymbol{\lambda},\boldsymbol{\tau}} : \mathbb{R}^N \rightarrow \mathbb{C}$};
	\node at (16,4.5) {$u : \mathbb{R}^N \rightarrow \mathbb{C}$};
	
	\node at (1.5,-1) {$L(U(\square(\boldsymbol{n}))) $};
	\node at (3.8,-1) {$=$};
	\node at (7.5,-1) {$\displaystyle \widetilde{\cL}_{\mathrm{disc}}[\widetilde{u}(\boldsymbol{n})]$};
	\node at (11,-1) {$=$};
	\node at (15,-1) {$\displaystyle \mathcal{L}_{\mathrm{disc}}[u(A_{c,\boldsymbol{\lambda},\boldsymbol{\tau}}(\boldsymbol{n}))]$};

	\node at (1.5,-2.5) {$\displaystyle \sum L(U(\square)) $};
	\node at (3.8,-2.5) {$=$};
	\node at (7.5,-2.5) {$\displaystyle \int_\Gamma \cL_\mod[\widetilde{u}] \, \mathrm{d} t_{i_1} \wedge \ldots \wedge \mathrm{d} t_{i_d}$};
	\node at (11,-2.5) {$=$};
	\node at (15,-2.5) {$\displaystyle \int_{A_{c,\boldsymbol{\lambda},\boldsymbol{\tau}}(\Gamma)} \cL_\miwa[u] \, \eta_{i_1} \wedge \ldots \wedge \eta_{i_d}$};
	\end{tikzpicture}
	\caption{Visualization of the lattice, the straight embedding from Section \ref{sec-straight-limit}, and the skew embedding by the Miwa correspondence.}
	\label{fig-miwa-embedding}
\end{figure}

Let $\widetilde{u} = u \circ A_{c,\mathsmaller \blambda,\btau}$. The modified Lagrangian in Miwa coordinates is given by
\begin{align}
\cL_\miwa([u],\lambda_1,\ldots, \lambda_d) \Big|_{\bt}
&= \cL_\mod([\widetilde{u}],1)  \Big|_\Asub \notag\\
&= \sum_{i_1,\ldots,i_d=0}^\infty \frac{B_{i_1} \ldots B_{i_d}}{i_1! \ldots i_d!} \partial_1^{i_1} \ldots \partial_d^{i_d} \cL_\disc([u],\lambda_1,\ldots, \lambda_d)  \Big|_{\bt} , \label{Lmiwa}
\end{align}
where $\cL_\disc$ is given by Equation \eqref{Ldisc}. The relation between $U$, $\widetilde{u}$, and $u$ is illustrated in Figure \ref{fig-miwa-embedding}. Note that, although $\widetilde{u} \big|_\Asub = u \big|_\bt$, no such equality holds for derivatives, hence 
\begin{align*}
[\widetilde{u}] \Big|_\Asub \neq [u] \Big|_{\bt}. 
\end{align*}

\begin{lemma}\label{lemma-meshed-miwa}
Consider a filled-in plaquette of the embedded lattice, ${A_{c,\blambda,\btau}(\blacksquare_{i_1,\ldots,i_d}(\bn))}$, and let $\eta_k$ be the 1-forms dual to the Miwa shifts, 
\[ \eta_k = (A_{c,\blambda,\btau}^{-1})^* \d t_k, \]
where ${}^*$ denotes the pullback. Then as a formal power series, $\cL_\miwa$, defined in Equation \eqref{Lmiwa}, satisfies
\begin{equation}\label{Miwa-integral}
\int_{{A_{c,\blambda,\btau}(\blacksquare_{i_1,\ldots,i_d}(\bn))}} \cL_\miwa([u],\lambda_{i_1}, \ldots, \lambda_{i_d}) \, \eta_{i_1} \wedge \ldots \wedge \eta_{i_d} = L_\disc(\square_{i_1,\ldots,i_d}(\bn)) .
\end{equation}
For any truncation of the power series $\cL_\miwa$, Equation \eqref{Miwa-integral} holds with a defect of the corresponding order in $\lambda_{i_1}, \ldots, \lambda_{i_d}$.
\end{lemma}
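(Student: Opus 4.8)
The plan is to reduce \eqref{Miwa-integral} to the local meshing identity \eqref{meshed} of Section \ref{sec-straight-limit} by pulling everything back along the affine map $A := A_{c,\blambda,\btau}$. First I would apply the change-of-variables formula for integrals of differential forms over chains: since $A$ is an affine diffeomorphism and the orientation on $A(\blacksquare_{i_1,\ldots,i_d}(\bn))$ is the one induced from $\blacksquare_{i_1,\ldots,i_d}(\bn)$ by $\eta_{i_1}\wedge\ldots\wedge\eta_{i_d}$, one has
\[ \int_{A(\blacksquare_{i_1,\ldots,i_d}(\bn))} \cL_\miwa([u],\lambda_{i_1},\ldots,\lambda_{i_d})\,\eta_{i_1}\wedge\ldots\wedge\eta_{i_d} = \int_{\blacksquare_{i_1,\ldots,i_d}(\bn)} A^*\!\Big(\cL_\miwa([u],\lambda_{i_1},\ldots,\lambda_{i_d})\,\eta_{i_1}\wedge\ldots\wedge\eta_{i_d}\Big). \]
I would then compute the pullback factor by factor. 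From the definition $\eta_k = (A^{-1})^*\d t_k$ it follows that $A^*\eta_k = \d t_k$, hence $A^*(\eta_{i_1}\wedge\ldots\wedge\eta_{i_d}) = \d t_{i_1}\wedge\ldots\wedge\d t_{i_d}$. For the coefficient, the defining relation \eqref{Lmiwa}, $\cL_\miwa([u],\lambda_{i_1},\ldots,\lambda_{i_d})|_\bt = \cL_\mod([\widetilde u],1)|_{A^{-1}(\bt)}$ with $\widetilde u = u\circ A$, is precisely the statement that $A^*\cL_\miwa([u],\ldots) = \cL_\mod([\widetilde u],1)$. Combining, the left-hand side of \eqref{Miwa-integral} equals $\int_{\blacksquare_{i_1,\ldots,i_d}(\bn)}\cL_\mod([\widetilde u(\bt)],1)\,\d t_{i_1}\wedge\ldots\wedge\d t_{i_d}$.

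It then remains to recognize this as an instance of \eqref{meshed}. Along $\blacksquare_{i_1,\ldots,i_d}(\bn)$ only the $d$ coordinates $t_{i_1},\ldots,t_{i_d}$ vary, and, read in the source variables, $\cL_\mod([\widetilde u],1)$ is exactly the $d$-fold product of one-dimensional Euler–Maclaurin series in those directions, with all mesh sizes equal to $1$, applied to the Taylor-expanded discrete Lagrangian. This is nothing but the modified-Lagrangian construction of Section \ref{sec-straight-limit} carried out for the $d$-dimensional sublattice spanned by $\fe_{i_1},\ldots,\fe_{i_d}$, with interpolating field $\widetilde u$; note that $\widetilde u(\bn) = u(A(\bn)) = U(\bn)$, so $\widetilde u$ does interpolate $U$. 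Therefore \eqref{meshed} applies verbatim with $h = 1$ and gives $\int_{\blacksquare_{i_1,\ldots,i_d}(\bn)}\cL_\mod([\widetilde u],1)\,\d t_{i_1}\wedge\ldots\wedge\d t_{i_d} = L_\disc(\square_{i_1,\ldots,i_d}(\bn))$, which is the claim.

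For the truncation statement I would track orders in the lattice parameters: each operator $\partial_{i_\ell} = \sum_j (-1)^{j+1}\frac{c\lambda_{i_\ell}^j}{j}\D_{t_j}$ contributes at least one power of $\lambda_{i_\ell}$, and the term of \eqref{Lmiwa} labelled by $(k_1,\ldots,k_d)$ contributes at least $k_\ell$ powers of each $\lambda_{i_\ell}$, so $\cL_\miwa$ is a genuine formal power series in $\lambda_{i_1},\ldots,\lambda_{i_d}$; truncating it at a fixed total degree corresponds, through the Euler–Maclaurin remainder estimate already quoted for $\cL_{\mod,k}$, to a defect of the next order in $\lambda_{i_1},\ldots,\lambda_{i_d}$. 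The only real obstacle is the bookkeeping: checking that the pullbacks of the coefficient function and of the volume form match up correctly under $A_{c,\blambda,\btau}$, and that the $N>d$ pluri-Lagrangian set-up genuinely collapses, plaquette by plaquette, onto the $N=d$ situation of Section \ref{sec-straight-limit}. Once that identification is made, no analytic content beyond the Euler–Maclaurin estimate established there is required.
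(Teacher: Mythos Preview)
Your argument is correct and follows essentially the same route as the paper: reduce \eqref{Miwa-integral} to the local identity \eqref{meshed} by the change of variables along $A_{c,\blambda,\btau}$, using that $A^*\eta_k=\d t_k$ and that the defining relation \eqref{Lmiwa} says precisely $A^*\cL_\miwa=\cL_\mod([\widetilde u],1)$. The paper's proof runs the chain of equalities in the opposite direction (starting from $L_\disc$ and pushing forward), but the content is identical; your extra paragraph explaining why the $N>d$ pluri-Lagrangian set-up collapses to the $N=d$ situation of Section~\ref{sec-straight-limit} on a single plaquette is a clarification the paper leaves implicit.
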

\begin{proof}
In Equation \eqref{meshed} we have the corresponding result for $\cL_\mod$, so the proof is a simple change of variables:
\begin{align*}
L_\disc(U(\square_{i_1,\ldots,i_d}(\bn)))
&= \int_{\blacksquare_{i_1,\ldots,i_d}(\bn)} \cL_\mod([\widetilde{u}],1) \Big|_\bt \, \d t_{i_1} \wedge \ldots \wedge \d t_{i_d} \\ 
&= \int_{A_{c,\mathsmaller \blambda,\btau}(\blacksquare_{i_1,\ldots,i_d}(\bn))} \cL_\mod([\widetilde{u}],1) \Big|_\Asub \, (A_{c,\blambda,\btau}^{-1})^* (\d t_{i_1} \wedge \ldots \wedge \d t_{i_d}) \\
&= \int_{{A_{c,\mathsmaller \blambda,\btau}(\blacksquare_{i_1,\ldots,i_d}(\bn))}} \cL_\miwa([u],\lambda_{i_1}, \ldots, \lambda_{i_d}) \Big|_\bt \, \eta_{i_1} \wedge \ldots \wedge \eta_{i_d} . \qedhere
\end{align*}
\end{proof}
We want to use this result for plaquettes in arbitrary directions. This suggests the Lagrangian $d$-form
\[ \sum_{1 \leq i_1 < \ldots < i_d \leq N} \cL_\miwa([u],\lambda_{i_1},\ldots,\lambda_{i_d}) \, \eta_{i_1} \wedge \ldots \wedge \eta_{i_d}. \]
Up to a truncation error, this $d$-form can be written in a much more convenient way. Let $\cT_N$ denote truncation of a power series after degree $N$ in each variable,
\[ \cT_N \Bigg( \sum_{i_1,\ldots,i_d=1}^\infty \lambda_1^{i_1} \ldots \lambda_d^{i_d} \, f_{i_1,\ldots,i_d} \Bigg) 
= \sum_{i_1,\ldots,i_d=1}^N \lambda_1^{i_1} \ldots \lambda_d^{i_d} \, f_{i_1,\ldots,i_d}.
\]

\begin{lemma}\label{lemma-miwa-to-pluri}
Assume that every term in the power series $\cL_\miwa$ \eqref{Lmiwa} is of strictly positive degree in each $\lambda_i$,
\begin{equation}\label{Lmiwa-expansion}
\cL_\miwa([u],\lambda_1,\ldots, \lambda_d) = \sum_{i_1,\ldots,i_d=1}^\infty (-1)^{i_1+\ldots+i_d} c^d \frac{\lambda_1^{i_1}}{i_1} \ldots \frac{\lambda_d^{i_d}}{i_d} \cL_{i_1,\ldots,i_d}[u] ,
\end{equation}
then
\[
 \sum_{\substack{1 \leq j_1 < \ldots \\ < j_d \leq N}} 
\cT_N \big( \cL_\miwa([u],\lambda_{j_1},\ldots,\lambda_{j_d}) \big) \, \eta_{j_1} \wedge \ldots \wedge \eta_{j_d}
=  \sum_{\substack{1 \leq i_1 < \ldots \\ < i_d \leq N}}  \cL_{i_1,\ldots,i_d}[u] \, \d t_{i_1} \wedge \ldots \wedge \d t_{i_d} .
\]
\end{lemma}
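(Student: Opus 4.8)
The plan is to expand both sides of the claimed equality in the basis of $d$-forms $\{\eta_{j_1}\wedge\ldots\wedge\eta_{j_d} : 1\le j_1<\ldots<j_d\le N\}$ — equivalently, to pull everything back along the Miwa map $A_{c,\blambda,\btau}$ — and then to match coefficients. Two elementary identities do the work. First, since $\eta_k$ is dual to the Miwa shift $\fv_k$ and the $i$-th component of $\fv_k$ equals $(-1)^{i+1}c\lambda_k^i/i$, every $1$-form $\omega$ expands as $\omega=\sum_{k=1}^N \omega(\fv_k)\,\eta_k$; in particular
\[ \d t_i = \sum_{k=1}^N (-1)^{i+1}\,\frac{c\,\lambda_k^{i}}{i}\,\eta_k . \]
Second, $\cL_\miwa([u],\lambda_1,\ldots,\lambda_d)$ is totally antisymmetric under permutations of $\lambda_1,\ldots,\lambda_d$: permuting the parameters in \eqref{Ldisc} permutes the Miwa shifts $\fv_k$ in lockstep, so $\cL_\disc$ acquires the sign of the permutation, and in \eqref{Lmiwa} the constant-coefficient operators $\partial_k$ commute and are permuted along with the parameters. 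Comparing coefficients of $\lambda_1^{k_1}\cdots\lambda_d^{k_d}$ in \eqref{Lmiwa-expansion} then forces the coefficient functions $\cL_{i_1,\ldots,i_d}[u]$ to be totally antisymmetric in their indices, and in particular to vanish as soon as two indices coincide.

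With these at hand I would expand the right-hand side: inserting the expression for $\d t_i$ into $\d t_{i_1}\wedge\ldots\wedge\d t_{i_d}$ and using multilinearity and skew-symmetry of $\wedge$, the terms with a repeated lower index drop out, and collecting the remaining ones over increasing tuples $j_1<\ldots<j_d$ yields a scalar multiple of $\sum_{j_1<\ldots<j_d}\det\big(\lambda_{j_a}^{i_b}\big)_{a,b=1}^d\,\eta_{j_1}\wedge\ldots\wedge\eta_{j_d}$, the generalized Vandermonde (alternant) determinant being exactly the signed sum over the permutations of the summation indices. For the left-hand side I would substitute \eqref{Lmiwa-expansion} into $\cT_N\big(\cL_\miwa([u],\lambda_{j_1},\ldots,\lambda_{j_d})\big)$: the truncation $\cT_N$ merely restricts each exponent to $\{1,\ldots,N\}$, and the total antisymmetry of $\cL_{i_1,\ldots,i_d}[u]$ collapses the sum over all $(k_1,\ldots,k_d)\in\{1,\ldots,N\}^d$ into a sum over increasing tuples carrying the very same determinant $\det(\lambda_{j_a}^{i_b})$ as its weight.

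Matching, for each fixed $j_1<\ldots<j_d$, the coefficient of $\eta_{j_1}\wedge\ldots\wedge\eta_{j_d}$ on the two sides then reduces the whole assertion to the equality of two scalar prefactors depending only on $c$ and $i_1,\ldots,i_d$ — a direct computation, after which the identity follows because the $\eta$-wedges form a basis and both sides are the same determinant-weighted sum over $1\le i_1<\ldots<i_d\le N$. It is worth stressing that the role of $\cT_N$ is precisely to make the two sums finite and of matching range: $\cL_\miwa$ is an infinite power series whereas the right-hand side is a $d$-form with the finitely many components $\cL_{i_1,\ldots,i_d}[u]$ for $i_b\le N$, and $\cT_N$ cuts off exactly the tail of higher-degree terms that would otherwise spoil the equality.

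I expect the main obstacle to be precisely this sign-and-permutation bookkeeping: the alternating sign of \eqref{Lmiwa-expansion}, the signs $(-1)^{i_b+1}$ entering through the rows of $M_{\blambda}$ (equivalently through the components of $\fv_k$), and the permutation signs produced both by reordering wedge factors and by collapsing multi-indexed sums into ordered ones must all recombine into the single alternant weight that occurs on each side. The only other point that needs care — though it is conceptually routine — is the verification of the total antisymmetry of $\cL_{i_1,\ldots,i_d}[u]$, which requires unwinding the definitions \eqref{Ldisc}--\eqref{Lmiwa}.
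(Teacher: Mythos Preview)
Your proposal is correct and follows essentially the same approach as the paper. The paper phrases the computation dually---it evaluates the right-hand side $\cL$ on the $d$-tuple of vectors $(\fv_{j_1},\ldots,\fv_{j_d})$ and shows that this pairing equals $\cT_N(\cL_\miwa([u],\lambda_{j_1},\ldots,\lambda_{j_d}))$---which is exactly your ``match the coefficient of $\eta_{j_1}\wedge\ldots\wedge\eta_{j_d}$'' computation, since the $\eta_k$ are dual to the $\fv_k$; the skew-symmetry observation and the use of $\langle \d t_i,\fv_j\rangle=(-1)^{i+1}c\lambda_j^i/i$ are identical in both.
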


Note in Equation \eqref{Lmiwa-expansion} that the factors $(-1)^{i_1+\ldots+i_d} c^d \frac{\lambda_1^{i_1}}{i_1} \ldots \frac{\lambda_d^{i_d}}{i_d}$ are terms of $(d \times d)$-minors of the transformation matrix $c M_{\blambda}$. 

\begin{proof}[Proof of Lemma \ref{lemma-miwa-to-pluri}]
First observe that, just like the discrete Lagrangian, the Lagrangian $\cL_\miwa([u],\lambda_{i_1},\ldots,\lambda_{i_d})$ is skew-symmetric as a function of $(\lambda_{i_1},\ldots,\lambda_{i_d})$. Therefore, the coefficients $\cL_{i_1,\ldots,i_d}[u]$ are skew-symmetric as a function of $(i_1,\ldots,i_d)$.

We pair the form
\begin{align*}
\cL &=\sum_{\substack{1 \leq i_1 < \ldots < i_d \leq N}}  \cL_{i_1,\ldots,i_d}[u] \, \d t_{i_1} \wedge \ldots \wedge \d t_{i_d} \\
&= \frac{1}{d!} \sum_{i_1,\ldots,i_d=1}^N \cL_{i_1,\ldots,i_d}[u] \, \d t_{i_1} \wedge \ldots \wedge \d t_{i_d}
\end{align*}
with a $d$-tuple of vectors $(\fv_{j_1}, \ldots, \fv_{j_d}) = (c M_{\blambda} \fe_{j_1}, \ldots, c M_{\blambda} \fe_{j_d})$:
\begin{align*}
\big\langle \cL , \left( \fv_{j_1}, \ldots, \fv_{j_d} \right) \big\rangle 
&= \frac{1}{d!} \sum_{i_1,\ldots,i_d=1}^N  \left( \cL_{i_1,\ldots,i_d}[u]
\sum_{\sigma \in S_d} \left( \sgn(\sigma) \prod_{k=1}^d \left\langle \d t_{i_{\sigma(k)}} , \fv_{j_k} \right\rangle \right) \right) .
\end{align*}
Due to the skew-symmetry of $\cL_{i_1,\ldots,i_d}[u]$, this can be written as
\[ \big\langle \cL , \left( \fv_{j_1}, \ldots, \fv_{j_d} \right) \big\rangle 
= \frac{1}{d!} \sum_{i_1,\ldots,i_d=1}^N \, \sum_{\sigma \in S_d} 
\left( \cL_{i_{\sigma(1)},\ldots,i_{\sigma(d)}}[u]  \prod_{k=1}^d \left\langle \d t_{i_{\sigma(k)}} , \fv_{j_k} \right\rangle \right) . \]
Since the first sum is over all $d$-tuples $(i_1,\ldots,i_d)$ with strictly positive integer entries, permuting $(i_1,\ldots,i_d)$ yields a different term of this sum. Hence the additional summation over permutations $\sigma \in S_d$ amounts to multiplication by $d!$. We find
\begin{align*}
\big\langle \cL , \left( \fv_{j_1}, \ldots, \fv_{j_d} \right) \big\rangle
&= \sum_{i_1 , \ldots , i_d = 1}^N  \cL_{i_1,\ldots,i_d}[u] \prod_{k=1}^d \left\langle \d t_{i_k} , \fv_{j_k} \right\rangle \\
&=  \sum_{i_1 , \ldots , i_d = 1}^N \cL_{i_1,\ldots,i_d}[u] \prod_{k=1}^d (-1)^{i_k} c \frac{\lambda_{j_k}^{i_k}}{i_k} \\
&= \cT_N \big( \cL_\miwa([u],\lambda_{j_1},\ldots,\lambda_{j_d}) \big) . \qedhere
\end{align*}
\end{proof}

The higher the dimension $N$ we consider, the more accurately discrete and continuous critical fields will correspond to each other. This accuracy is made precise with the following notion.
 
\begin{defi}
	Let $\Gamma$ be a finite discrete surface in $\Z^N$. A discrete field $U: \Z^N \mapsto \C$ is \emph{$k$-critical} for the action
	\[ S = \sum_{\square_{i_1,\ldots,i_d}(\bn) \in \Gamma} L \big( U(\square_{i_1,\ldots,i_d}(\bn)) ,\lambda_{i_1}, \ldots, \lambda_{i_d} \big) . \]
	if for any $\bn \in \Z^N$ there holds
	\[ \der{S}{U(\bn)} = \O \big( \big( \lambda_1^{k+1} + \ldots + \lambda_N^{k+1} \big) |S| \big). \]
\end{defi}

We now arrive at our main result.

\begin{thm}\label{thm}
Let $L_\disc$ be a discrete Lagrangian $d$-form, such that every term in the corresponding power series $\cL_\miwa$ \eqref{Lmiwa} is of strictly positive degree in each $\lambda_i$, i.e.\@ such that $\cL_\miwa$ is of the form \eqref{Lmiwa-expansion}. Consider the differential $d$-form
\[ \cL  = \sum_{1 \leq i_1 < \ldots < i_d \leq N}
\cL_{i_1,\ldots,i_d}[u] \, \d t_{i_1} \wedge \ldots \wedge \d t_{i_d}, \]
built out of the coefficients of $\cL_\miwa$. Then a field $u:\R^N \rightarrow \C$ is a solution to the continuous pluri-Lagrangian problem for $\cL$ if and only if the corresponding discrete fields 
\[ U_{\btau}:\Z^N \rightarrow \C: \bn \mapsto u(A_{c,\blambda,\btau}(\bn)), \qquad \btau \in \R^N,\]
with $A_{c,\blambda,\btau}$ given by equation \eqref{miwaA}, are $N$-critical for the discrete pluri-Lagrangian problem for $L_\disc$.
\end{thm}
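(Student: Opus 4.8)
The plan is to transport the statement onto the Miwa-embedded lattice and reduce it to Lemmas~\ref{lemma-meshed-miwa} and~\ref{lemma-miwa-to-pluri} together with the structure of the multi-time Euler--Lagrange equations.

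\textbf{Step 1 (actions match up to the truncation order).} For a finite discrete $d$-surface $\Gamma\subset\Z^N$, applying Lemma~\ref{lemma-meshed-miwa} to each plaquette and summing gives
\[ S_\Gamma = \sum_{\square\in\Gamma} L_\disc(\square) = \int_{A_{c,\blambda,\btau}(\Gamma)} \sum_{1\le i_1<\ldots<i_d\le N} \cL_\miwa([u],\lambda_{i_1},\ldots,\lambda_{i_d})\, \eta_{i_1}\wedge\ldots\wedge\eta_{i_d}. \]
By Lemma~\ref{lemma-miwa-to-pluri}, replacing $\cL_\miwa$ by its truncation $\cT_N(\cL_\miwa)$ turns the integrand into exactly $\cL$; since the discarded terms are of degree $\ge N+1$ in some $\lambda_i$, the plaquettewise identity becomes $L_\disc(\square_{i_1,\ldots,i_d}(\bn)) = \int_{A_{c,\blambda,\btau}(\blacksquare_{i_1,\ldots,i_d}(\bn))}\cL + \O(\lambda_{i_1}^{N+1}+\ldots+\lambda_{i_d}^{N+1})$, and summing over the plaquettes of $\Gamma$ yields $S_\Gamma = \int_{A_{c,\blambda,\btau}(\Gamma)}\cL + \O\big((\lambda_1^{N+1}+\ldots+\lambda_N^{N+1})|S_\Gamma|\big)$, with the error normalised as in the definition of $N$-criticality.

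\textbf{Step 2 (variations are the multi-time Euler--Lagrange expressions).} I would then differentiate the plaquettewise identity of Step~1 with respect to the single value $U(\bn)=u(A_{c,\blambda,\btau}(\bn))$ and sum over the plaquettes of $\Gamma$ containing $\bn$; since $\blambda$ and the discrete values are independent, the error term stays of degree $\ge N+1$ after differentiation. On the discrete side this assembles $\partial S_\Gamma/\partial U(\bn)$. On the continuous side the plaquettes of $\Gamma$ around $\bn$ are carried by the linear isomorphism $A_{c,\blambda,\btau}$ to the elementary building blocks of a stepped surface --- a flat $d$-piece, an edge, or a corner --- now spanned by the Miwa vectors $\fv_i$ instead of coordinate vectors. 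Running the Euler--Maclaurin and Taylor bookkeeping of Section~\ref{sec-straight-limit} in these skew coordinates, which is precisely what $\cL_\miwa$ in~\eqref{Lmiwa} encodes (cf.\ the proof of Lemma~\ref{lemma-meshed-miwa}), reorganises $\partial S_\Gamma/\partial U(\bn)$, modulo the $\O(\lambda^{N+1})$ truncation, into the corresponding combination of multi-time Euler--Lagrange expressions of $\cL$: equation~\eqref{EL21} when $A_{c,\blambda,\btau}(\bn)$ is in the relative interior of a flat piece, \eqref{EL22} on an edge, \eqref{EL23} at a corner (and, for $d=1$, \eqref{EL11} along a straight segment and \eqref{EL12} at a bend). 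Only finitely many multi-indices $I$ occur, because each coefficient $\cL_{i_1,\ldots,i_d}$ inherits from the truncation a bound on the order of the derivatives it contains, so for fixed $N$ the relevant Euler--Lagrange equations form a finite set, each carried by a $\blambda$-monomial of degree $\le N$ in every variable.

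\textbf{Step 3 (conclusion) and the main obstacle.} If $u$ solves the continuous pluri-Lagrangian problem for $\cL$, then by~\eqref{EL11}--\eqref{EL23} every expression produced in Step~2 vanishes, so $\partial S_\Gamma/\partial U(\bn)=\O\big((\lambda_1^{N+1}+\ldots+\lambda_N^{N+1})|S_\Gamma|\big)$ for all $\Gamma,\bn,\btau$, i.e.\ each $U_\btau$ is $N$-critical. Conversely, if every $U_\btau$ is $N$-critical, then for all $\Gamma,\bn,\btau$ the combination of multi-time Euler--Lagrange expressions of $\cL$ found in Step~2 is $\O\big((\lambda_1^{N+1}+\ldots)|S_\Gamma|\big)$; since these expressions are independent of $\blambda$, comparing coefficients of each $\blambda$-monomial of degree $\le N$ in every variable shows that certain fixed combinations of $\var{i}{\cL_i}{u_I}$ and $\var{ij}{\cL_{ij}}{u_I}$ vanish, and since $\btau$ can be chosen so that $A_{c,\blambda,\btau}(\bn)$ is any prescribed point of $\R^N$ and $\Gamma$ so that it realises any prescribed flat piece, edge or corner there, this forces~\eqref{EL11}--\eqref{EL23} to hold identically, hence $u$ solves the continuous pluri-Lagrangian problem. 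The crux of the argument is Step~2: establishing rigorously that the discrete variational derivative $\partial S_\Gamma/\partial U(\bn)$, under the skew Miwa interpolation, expands --- to within the truncation order --- into exactly the multi-time Euler--Lagrange expressions of $\cL$. This is the pluri-Lagrangian, Miwa-embedded analogue of the correspondence between discrete Euler--Lagrange equations and the Euler--Lagrange equations of the modified Lagrangian from~\cite{vermeeren2015modified}, and it requires care with (i) the combinatorics of which shifted plaquettes of $\Gamma$ meet $\bn$ for each local shape of $\Gamma$, mirroring the stepped-surface building blocks underlying the continuous multi-time Euler--Lagrange equations, and (ii) how the Euler--Maclaurin corrections and Taylor expansions in the $\fv_i$-directions reassemble into the variational derivatives $\var{i}{}{u_I}$ and $\var{ij}{}{u_I}$. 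A secondary point, needed only for the converse, is that the embedded surfaces $\{A_{c,\blambda,\btau}(\Gamma)\}$ are rich enough to detect non-vanishing of these differential polynomials in $u$, which is immediate since $\btau$ ranges over all of $\R^N$ and $\Gamma$ over all finite discrete surfaces.
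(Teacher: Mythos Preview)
Your Step~1 matches the paper's use of Lemmas~\ref{lemma-meshed-miwa} and~\ref{lemma-miwa-to-pluri}, but the rest of your argument takes a substantially harder road than the paper's. You try to expand the discrete variational derivative $\partial S_\Gamma/\partial U(\bn)$ into the multi-time Euler--Lagrange expressions of $\cL$ and then match coefficients; you correctly identify this (Step~2) as the crux, and you leave it essentially unproved. The paper bypasses this obstacle completely.

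The paper's argument stays at the level of actions and smooth variations, never decomposing into Euler--Lagrange expressions. For the forward direction it uses \eqref{disc-approx} directly: a smooth variation of $u$ supported near a single lattice point picks out $\partial S_\Gamma/\partial U(\bn)$ on the discrete side, and criticality of $u$ kills the continuous side, leaving only the $\O(\lambda^{N+1})$ defect. For the converse the paper introduces one extra idea you do not use: it starts from a smooth $\blambda$-\emph{independent} surface $\Gamma\subset\R^N$, approximates it by an embedded discrete surface with error $\O(\lambda_1+\ldots+\lambda_N)$, and observes that $\delta\int_\Gamma\cL$ is itself independent of $\blambda$. From \eqref{surface-approx} and \eqref{disc-approx} together with discrete criticality one gets $\delta\int_\Gamma\cL=\O(\lambda_1+\ldots+\lambda_N)$; since the left side does not depend on $\blambda$, it must vanish exactly. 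This trick replaces your entire coefficient-comparison argument in Step~3 and, as a bonus, shows that even $0$-criticality of the discrete fields suffices for the converse --- a strengthening your approach does not yield.

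So your proposal is not wrong in spirit, but the ``main obstacle'' you flag is real and is precisely what the paper's $\blambda$-independence argument is designed to avoid. If you wanted to push your route through, you would need the pluri-Lagrangian analogue of the modified-Lagrangian Euler--Lagrange correspondence from \cite{vermeeren2015modified}, which is not established here; the paper's proof shows it is unnecessary.
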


\begin{proof}
	Consider a bounded $d$-surface $\Gamma$ in $\R^N$ that does not depend on $\blambda$. We can approximate it by an image of a discrete surface $\overline{\Gamma}$ under the Miwa embedding, with an error of order $\O( \lambda_1 + \ldots + \lambda_N)$,
	\begin{equation}\label{surface-approx}
	\int_{\Gamma} \cL[u] =  \int_{A_{c,\mathsmaller \blambda,\btau} \left( \overline{\Gamma} \right)} \cL[u] + \O( \lambda_1 + \ldots + \lambda_N).
	\end{equation}
	This idea of approximating any given surface by a \emph{stepped surface} was used in \cite{suris2016lagrangian} to derive the multi-time Euler-Lagrange equations.
	
	By Lemmas \ref{lemma-meshed-miwa} and \ref{lemma-miwa-to-pluri} it follows that for any continuous field $u$ and its corresponding discrete field $U_{\btau}$ we have
	\begin{equation}\label{disc-approx}
	\int_{A_{c,\mathsmaller \blambda,\btau} \left( \overline{\Gamma} \right)} \cL[u] = \sum_{\square_{i_1,\ldots,i_d}(\bn) \in \overline{\Gamma}} L_\disc(U_{\btau}(\square_{i_1,\ldots,i_d}(\bn)),\lambda_{i_1}, \ldots, \lambda_{i_d}) + \O \big( \lambda_1^{N+1} + \ldots + \lambda_N^{N+1} \big),
	\end{equation}
	hence if the continuous field $u$ is critical, then the discrete field $U_{\btau}$ is $N$-critical.
	
	From Equations \eqref{surface-approx} and \eqref{disc-approx} it follows that
	\[  \int_{\Gamma} \cL[u] = L_\disc(U_{\btau}(\square_{i_1,\ldots,i_d}(\bn)),\lambda_{i_1}, \ldots, \lambda_{i_d}) + \O( \lambda_1 + \ldots + \lambda_N) . \]
	Now assume that the discrete field is $0$-critical. Then for any $\blambda$-independent variation of $u$ that is zero near the boundary of $\Gamma$, we have that 
	\[ \delta \int_{\Gamma} \cL[u] =  \O( \lambda_1 + \ldots + \lambda_N) . \]
	Since the left hand side is independent of $\blambda$, it must be exactly zero. Hence $u$ is a critical field.
\end{proof}

Note that we did not just prove that discrete $N$-criticality is equivalent to continuous criticality, but also that discrete $0$-criticality \emph{implies} continuous criticality. Hence if a discrete field, obtained from a $\blambda$-independent continuous field $u$ by the relation $U_{\btau}(\bn) = u(A_{c,\blambda,\btau}(\bn))$, is just $0$-critical, then it is automatically $N$-critical. Of course this does not hold for arbitrary discrete fields.

\subsection{Eliminating alien derivatives}

Unlike in the classical Lagrangian framework, Euler-Lagrange equations in the pluri-Lagrangian context are often \emph{evolutionary}. For a pluri-Lagrangian $d$-form in $\R^N$, this means the hierarchy is of the form
\[ u_{t_k} = f_k[u] \qquad \text{for } k \in \{d,d+1,\ldots,N\} , \]
 where the $f_k$ only depend on derivatives with respect to $t_1, \ldots, t_{d-1}$. Then the differential consequences of the multi-time Euler-Lagrange equations can be written in a similar form,
\begin{equation}\label{fullel}
	u_I = f_I[u] \qquad \text{with $I \ni t_k$ for some } k \in \{d,d+1,\ldots,N\},
\end{equation}
where the $I$ in $f_I$ is a label, not a partial derivative. In this context it is natural to consider the first $d-1$ coordinates $t_1, \ldots,t_{d-1}$ as space coordinates and the others as time coordinates. If the multi-time Euler-Lagrange equations are not evolutionary, Equation \eqref{fullel} still holds for a smaller set of multi-indices $I$.

\begin{defi}
A function $f[u]$ is called
\begin{enumerate}[$(a)$]
\item \emph{$\{i_1,\ldots,i_{d}\}$-native} if it only depends on $u$ and its derivatives with respect to $t_{i_1},\ldots,$ $t_{i_d}$ and with respect to the space coordinates $t_1,\ldots,t_{d-1}$,
\item \emph{$\{i_1,\ldots,i_{d}\}$-alien} if it is not $\{i_1,\ldots,i_{d}\}$-native, i.e.\@ if it depends on a $t_k$-derivative with $k \not\in \{1,\ldots,d-1,i_1,\ldots,i_d\}$.
\end{enumerate}
A multi-index $I$ is said to be native or alien if the corresponding derivative $u_I$ is of that type.
\end{defi}

We would like the coefficient $\cL_{i_1,\ldots,i_d}$ to be $\{i_1,\ldots,i_d\}$-native. A naive approach would be to use the multi-time Euler-Lagrange equations \eqref{fullel} to eliminate all alien derivatives. Let $R_{i_1,\ldots,i_d}$ denote the operator that replaces all $\{i_1,\ldots,i_d\}$-alien derivatives for which \eqref{fullel} provides an expression. If the hierarchy is evolutionary $R_{i_1,\ldots,i_d}$ replaces all alien derivatives. We denote the resulting pluri-Lagrangian coefficients by
\[\overline{\cL}_{i_1,\ldots,i_d} = R_{i_1,\ldots,i_d}(\cL_{i_1,\ldots,i_d})\]
and the $d$-form with these coefficients by $\overline{\cL}$. A priori there is no reason to believe that the $d$-form $\overline{\cL}$ will be equivalent to the original pluri-Lagrangian $d$-form $\cL$. For example, the 2-dimensional Lagrangian 1-form $\cL(u,u_{t_1},u_{t_2}) = u_{t_2}^2 \d t_1 + u_{t_1}^2 \d t_2$ leads to the multi-time Euler-Lagrange equations $u_{t_1} = u_{t_2} =0$, but any field $u$ is critical for the Lagrangian $\overline{\cL}(u,u_{t_1},u_{t_2}) = 0$. However, in many cases the pluri-Lagrangian structure guarantees that $\cL$ and $\overline{\cL}$ have the same critical fields.

\begin{thm}\label{thm-alien}
Assume that $R_{i_1,\ldots,i_d}$ commutes with the operators $\D_{t_{i_1}}, \ldots, \D_{t_{i_d}}$. If either
\begin{itemize}
\item $d = 1$ and $\cL_1[u]$ does not depend on any alien derivatives, or
\item $d = 2$ and for all $j$ the coefficient $\cL_{1j}[u]$ does not contain any alien derivatives,
\end{itemize}
then every critical field $u$ for the pluri-Lagrangian $d$-form $\cL$ is also critical for $\overline{\cL}$.
\end{thm}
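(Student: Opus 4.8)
The plan is to compare the multi-time Euler--Lagrange equations of $\cL$ and $\overline{\cL}$ directly, using the variational derivatives $\var{i}{}{u_I}$ (for $d=1$) or $\var{ij}{}{u_I}$ (for $d=2$), and to show that on a field $u$ that satisfies the equations \eqref{fullel}, the replacement operator $R_{i_1,\ldots,i_d}$ does not affect which equations hold. The key observation to exploit is that the coefficients $\cL_{1j}$ (resp.\ $\cL_1$) are already native by hypothesis, so the only place aliens enter is through the \emph{other} coefficients $\cL_{ij}$ with $i,j\neq 1$, and through differential consequences. Because $R_{i_1,\ldots,i_d}$ commutes with $\D_{t_{i_1}},\ldots,\D_{t_{i_d}}$, applying $R$ commutes with the variational derivative $\var{ij}{}{\cdot}$ taken with respect to those same directions, so $\var{ij}{\overline{\cL}_{ij}}{u_I} = R_{ij}\!\left(\var{ij}{\cL_{ij}}{u_I}\right)$ whenever $I$ is native; this is the technical lemma I would isolate first.

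Next I would go through the three families of multi-time Euler--Lagrange equations \eqref{EL21}--\eqref{EL23} (and the $d=1$ analogue \eqref{EL11}--\eqref{EL12}) one at a time. For the ``flat-piece'' equations \eqref{EL21}, $\var{ij}{\cL_{ij}}{u_I}=0$ for native $I\not\ni t_i,t_j$: apply $R_{ij}$; by the commutation lemma the left side becomes $\var{ij}{\overline{\cL}_{ij}}{u_I}$, and $R_{ij}$ sends $0$ to $0$, giving the corresponding equation for $\overline{\cL}$ — but one must check that $R_{ij}$ does not \emph{create} new nontrivial equations for alien $I$, which is where nativeness of $\overline{\cL}_{ij}$ (automatic after applying $R$) is used: $\var{ij}{\overline{\cL}_{ij}}{u_I}$ for alien $I$ is identically zero since $\overline{\cL}_{ij}$ contains no alien derivative and $I\not\ni t_i,t_j$. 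The ``corner'' equations \eqref{EL23} are the cleanest, because they involve a cyclic sum over three directions $i,j,k$ and the index $I t_it_j$ etc.\ always contains one of the distinguished directions; I would show each summand transforms correctly under the relevant $R$. The ``edge'' equations \eqref{EL22} are the subtle ones: they relate $\var{ij}{\cL_{ij}}{u_{It_j}}$ to $\var{ik}{\cL_{ik}}{u_{It_k}}$, i.e.\ two coefficients with \emph{different} alien structures, and here I expect to need the hypothesis that $\cL_{1j}$ is native together with the skew-symmetry to reduce, for any pair $i,j$, to a comparison that passes through the native coefficient $\cL_{1\cdot}$, or alternatively to use that $R_{ij}$ and $R_{ik}$ agree on the shared native derivatives.

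The cleanest route, which I would try first, is more conceptual: by Theorem~\ref{thm} the continuous criticality of $u$ is equivalent to $N$-criticality of the embedded discrete fields $U_{\btau}$, and the statement we want is really about whether passing to $\overline{\cL}$ changes the discrete action sums. But since $\overline{\cL}$ is obtained by substituting relations \eqref{fullel} that hold identically on solutions $u$, the forms $\cL[u]$ and $\overline{\cL}[u]$ \emph{agree as functions} once $u$ is a solution; the content of the theorem is the converse-type assertion that criticality is preserved, i.e.\ that the substitution does not destroy the vanishing of the first variation. So I would argue: for a variation $\delta u$ supported away from $\partial\Gamma$, write $\delta\int_\Gamma \overline{\cL}[u]$ using the chain rule through the substitution $R_{i_1,\ldots,i_d}$; the extra terms produced are proportional to the equations \eqref{fullel} themselves (which hold because $u$ is critical for $\cL$) times $\delta u$ and its derivatives, and these integrate to boundary terms via the commutation of $R$ with the $\D_{t_{i_k}}$, hence vanish. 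The nativeness hypothesis on $\cL_{1j}$ (resp.\ $\cL_1$) is what guarantees that the only substituted quantities are genuine differential consequences of the Euler--Lagrange equations and not, e.g., the equations used to define the substitution in a circular way.

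\textbf{Main obstacle.} I expect the real difficulty to be the edge equations \eqref{EL22} in the $d=2$ case: there the two sides involve coefficients $\cL_{ij}$ and $\cL_{ik}$ that may both contain alien derivatives, and showing that $R_{ij}$ applied to one side matches $R_{ik}$ applied to the other requires carefully tracking which derivatives are ``shared'' (native to both $\{i,j\}$ and $\{i,k\}$, namely the space derivatives and $t_i$-derivatives) versus genuinely alien, and using the commutation hypothesis to move $R$ past the variational derivatives. Bookkeeping the interplay between the index $I$, the appended $t_j$ or $t_k$, and the alien/native dichotomy — especially ensuring no spurious equations are introduced — is where the argument will be most delicate, and likely where the hypotheses (commutation of $R$ with the $\D_{t_{i_k}}$, and nativeness of the ``$1$''-coefficients) are used in an essential way.
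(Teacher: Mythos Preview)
Your first approach rests on the claimed lemma
\[
\var{ij}{\overline{\cL}_{ij}}{u_I} \;=\; R_{ij}\!\left(\var{ij}{\cL_{ij}}{u_I}\right) \qquad \text{for native } I,
\]
and this identity is false in general. The replacement $\overline{\cL}_{ij} = R_{ij}(\cL_{ij})$ substitutes each alien $u_J$ by a native expression $f_J[u]$, so when you differentiate $\overline{\cL}_{ij}$ with respect to a native $u_I$ the chain rule produces extra terms $R_{ij}\!\big(\partial \cL_{ij}/\partial u_J\big)\cdot \partial f_J/\partial u_I$ for every alien $J$. These terms do not vanish, and they are exactly the obstruction to commuting $R_{ij}$ past the variational derivative. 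The hypothesis that $R_{ij}$ commutes with $\D_{t_i},\D_{t_j}$ addresses only the \emph{horizontal} derivatives appearing in $\var{ij}{}{u_I}$; it says nothing about commutation with $\partial/\partial u_I$, which is where the failure occurs. So the plan of verifying \eqref{EL21}--\eqref{EL23} for $\overline{\cL}$ by simply applying $R_{ij}$ to the corresponding equations for $\cL$ does not go through.

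Your second, more conceptual paragraph is closer in spirit to the paper's argument, but the mechanism you describe is not quite the one that works. You suggest that the extra terms in $\delta\!\int_\Gamma \overline{\cL}$ are ``proportional to the equations \eqref{fullel}'' and hence integrate away. In fact the terms proportional to the hierarchy equations just vanish pointwise on a solution; the nontrivial survivors of $\delta\overline{\cL}$ are total-derivative pieces whose integrands still carry variational derivatives $\var{ij}{\cL_{ij}}{u_{Jt_i}}$, $\var{ij}{\cL_{ij}}{u_{Jt_j}}$, $\var{ij}{\cL_{ij}}{u_{Jt_it_j}}$. The step you are missing is that one uses the edge and corner equations \eqref{EL22}--\eqref{EL23} for $\cL$ (which hold because $u$ is critical for $\cL$) to rewrite every such quantity in terms of the coefficients $\cL_{1i}$, $\cL_{1j}$. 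Only then does the nativeness hypothesis bite: since the $\cL_{1\cdot}$ are native, the replacement operators $R_{ij}$ can be dropped and the remaining $F_{ij,J}$ reduce to $u_J$, after which $\delta\overline{\cL}$ is recognized as $\d$ of an explicit $(1,d-1)$-form. So the commutation hypothesis is used to pull $\D_{t_i},\D_{t_j}$ outside $R_{ij}$, the multi-time Euler--Lagrange equations for $\cL$ are used to route everything through the native $\cL_{1\cdot}$, and the nativeness hypothesis is used at the end to discard $R$ and obtain $\d$-exactness. Your proposal correctly anticipates that \eqref{EL22} is the crux, but for a different reason than you expect: it is not an obstacle to be overcome but the tool that makes the argument work.
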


In particular, the commutativity condition holds if the equations in the hierarchy are evolutionary, or more generally, if none of their left hand sides are a mixed partial derivative. The condition for $d=2$ might seem restrictive, but given a Lagrangian $2$-form, we can often find an equivalent one with native coefficients $\cL_{1j}[u]$ by inspection.

\begin{proof}[Proof of Theorem \ref{thm-alien}]
In this proof we consider the variation operator $\delta$ as the vertical exterior derivative in the variational bicomplex. A short introduction to the variational bicomplex is given in Appendix \ref{appendix}. 

First we consider the case $d = 1$. Let 
\[ F_{i,J}[u] = R_i(u_J),\]
i.e.\@ $F_{i,J} = u_J$ if $J$ is $\{i\}$-native and $F_{i,J}$ is the native replacement for $u_J$ otherwise. Note that $\D_{t_i}$ and $R_i$ commute, hence $\D_{t_i} F_{i,J} = F_{i,Jt_i}$.  We have
\begin{align*}
\delta \overline{\cL}
&= \sum_{1 \leq i \leq N} \sum_{J} R_i\!\left( \der{\cL_i}{u_J} \right) \delta F_{i,J} \wedge \d t_i \\
&= \sum_{1 \leq i \leq N} 
\sum_J R_i\!\left( \var{i}{\cL_i}{u_J} + \D_{t_i} \var{i}{\cL_i}{u_{J t_i}} \right) \delta F_{i,J}
\wedge \d t_i \\
&= \sum_{1 \leq i \leq N} 
 \left( \sum_{J \not\ni t_i} R_i\!\left( \var{i}{\cL_i}{u_J} \right) \delta F_{i,J} + \sum_{J}
\D_{t_i} \!\left( R_i\!\left( \var{i}{\cL_i}{u_{J t_i}} \right) \delta F_{i,J} \right) \right) \wedge \d t_i  .
\end{align*}
Hence on solutions of the pluri-Lagrangian problem for $\cL$ there holds that
\[ \delta \overline{\cL}
= \sum_{1 \leq i \leq N} \left( \D_{t_i} \sum_J \var{1}{\cL_1}{u_{J t_1}} \delta F_{i,J} \right) \wedge \d t_i .\]
Using the assumption that no alien derivatives occur in $\cL_1$, we can simplify this to
\[ \delta \overline{\cL} 
= \sum_{1 \leq i \leq N} \D_{t_i}\! \left( \sum_{\alpha = 0}^\infty \der{\cL_1}{u_{t_1^{\alpha+1}}} \delta u_{t_1^\alpha} \right) \wedge \d t_i 
= \d\! \left( - \sum_{\alpha = 0}^\infty \der{\cL_1}{u_{t_1^{\alpha+1}}} \delta u_{t_1^\alpha} \right) . \]
The fact that $\delta \overline{\cL}$ is exact with respect to $\d$ implies that $\delta \int_\Gamma \overline{\cL} = 0$ for all curves $\Gamma$ and all variations that are zero on the endpoints of $\Gamma$. Hence $u$ is a solution to the pluri-Lagrangian problem for $\overline{\cL}$.

Now we consider the case $d = 2$. Let \[F_{ij,J} = R_{ij}(u_J).\] Note that $R_{ij}$ commutes with both $\D_{t_i}$ and $\D_{t_j}$. We have
\begin{align*}
&\delta \overline{\cL} = \sum_{1 \leq i < j  \leq N} \sum_J 
R_{ij}\!\left( \der{\cL_{ij}}{u_J} \right) \delta F_{ij,J} \wedge \d t_i \wedge \d t_j
\\
&= \sum_{1 \leq i < j \leq N} \sum_J 
 R_{ij}\!\left( \var{ij}{\cL_{ij}}{u_J} + \D_{t_i} \var{ij}{\cL_{ij}}{u_{Jt_i}} + \D_{t_j} \var{ij}{\cL_{ij}}{u_{Jt_j}} + \D_{t_i} \D_{t_j} \var{ij}{\cL_{ij}}{u_{Jt_it_j}} \right) \delta F_{ij,J} 
  \wedge \d t_i \wedge \d t_j
\\
&= \sum_{1 \leq i < j \leq N} \Bigg( 
\sum_{J \not\ni t_i,t_j}  R_{ij}\!\left( \var{ij}{\cL_{ij}}{u_J} \right) \delta F_{ij,J} 
+ \sum_{J \not\ni t_j}  \D_{t_i} \!\left( R_{ij}\!\left( \var{ij}{\cL_{ij}}{u_{Jt_i}} \right) \delta F_{ij,J} \right) \\
&\qquad + \sum_{J \not\ni t_i} \D_{t_j} \!\left( R_{ij}\!\left( \var{ij}{\cL_{ij}}{u_{Jt_j}} \right) \delta F_{ij,J} \right) + \sum_{J} \D_{t_i} \D_{t_j}  \!\left( R_{ij}\!\left( \var{ij}{\cL_{ij}}{u_{Jt_it_j}} \right) \delta F_{ij,J} \right) \Bigg)
 \wedge \d t_i \wedge \d t_j .
\end{align*}
On solutions of the pluri-Lagrangian problem for $\cL$ there holds that
\begin{align*}
\delta \overline{\cL}
&= \sum_{1 \leq i < j \leq N} \Bigg( 
\sum_{J \not\ni t_j} \D_{t_i} \!\left( \var{1j}{\cL_{1j}}{u_{Jt_1}} \delta F_{ij,J} \right) - \sum_{J \not\ni t_i}\D_{t_j} \!\left( \var{1i}{\cL_{1i}}{u_{Jt_1}} \delta F_{ij,J} \right) \\
&\hspace{37mm} + \sum_{J}  \D_{t_i} \D_{t_j} \!\left( \left( \var{1j}{\cL_{1j}}{u_{Jt_1t_j}} - \var{1j}{\cL_{1i}}{u_{Jt_1t_i}} \right) \delta F_{ij,J} \right)
\Bigg) \wedge \d t_i \wedge \d t_j ,
\end{align*}
where we have left out the $R_{ij}$ because the $\cL_{1j}$ do not contain any alien derivatives. For the same reason, only terms where $J$ is $\{i,j\}$-native can be nonzero, so in all nonvanishing terms we find $F_{ij,J} = u_J$. Therefore,
\begin{align*}
\delta \overline{\cL}
&= \sum_{1 \leq i < j \leq N} \left( 
\D_{t_i} \!\Bigg( 
\sum_{J \not\ni t_j} \var{1j}{\cL_{1j}}{u_{Jt_1}} \delta u_J +
\sum_J \D_{t_j} \!\left(  \var{1j}{\cL_{1j}}{u_{Jt_1t_j}}  \delta u_J \Bigg) 
\right) \right.
\\
&\hspace{22mm}\left. -\D_{t_j} \!\Bigg( 
\sum_{J \not\ni t_i} \var{1i}{\cL_{1i}}{u_{Jt_1}} \delta u_J +
\sum_J \D_{t_j} \!\left( \var{1i}{\cL_{1i}}{u_{Jt_1t_i}}  \delta u_J \right) 
\Bigg) 
\right) \wedge \d t_i \wedge \d t_j
\\
&= \d \!\left( \sum_{1 \leq j \leq N} 
\left( 
\sum_{J \not\ni t_j} \var{1j}{\cL_{1j}}{u_{Jt_1}} \delta u_J +
\sum_J \D_{t_j} \!\left(  \var{1j}{\cL_{1j}}{u_{Jt_1t_j}}  \delta u_J \right) 
\right) \wedge \d t_j \right) .
\end{align*}
This implies that $\delta \int_\Gamma \overline{\cL} = 0$ for all surfaces $\Gamma$ and all variations that are zero on the boundary of $\Gamma$. Hence $u$ is a solution to the pluri-Lagrangian problem for $\overline{\cL}$.
\end{proof}

\section{Examples}\label{sec-examples}

The plan for this section is as follows. We begin with the 1-form case and discuss the continuum limit of the discrete Toda lattice. After that we present three examples for the 2-form case. The first one is a linear quad equation. This will help us understand how to proceed for the two non-linear quad equations that follow, H1 and Q1$_{\delta = 0}$ from the ABS list. In each of the examples we first perform the continuum limit on the level of equations and then discuss the pluri-Lagrangian structure.

\subsection{Toda lattice}

\subsubsection{Equation}

The Toda lattice consists of a number of particles on a line with an exponential nearest-neighbor force. If we denote the displacements of the particles from their equilibrium positions by 
\[ q(t) = (\q{0}(t),\q{1}(t),\ldots,\q{N}(t)), \]
then their motion is described by the equation
\[ \frac{\d^2 \q{k}}{\d t^2} = \exp\!\left( \q{k+1} - \q{k} \right) - \exp\!\left( \q{k} - \q{k-1} \right) . \]
There are two common conventions regarding boundary conditions: periodic (formally $\q{N+1} \equiv \q{1}$) and open-end (formally $\q{0} \equiv +\infty$ and $\q{N+1} \equiv -\infty$). An integrable discretization of the Toda lattice is given by \cite[Chapter 5]{suris2003problem}
\begin{equation}\label{dtoda}
\begin{split}
& \frac{1}{\lambda_i} \left( \exp\!\left( \Q{k}_i - \Q{k} \right)-  \exp\!\left( \Q{k} - \Q{k}_{-i} \right) \right) \\
&\qquad  + \lambda_i \left(  \exp\!\left( \Q{k} - \Q{k-1}_i \right) -  \exp\!\left( \Q{k+1}_{-i} - \Q{k} \right) \right) =  0,
\end{split}
\end{equation}
where the subscripts $i$ and $-i$ denote forward and backward shifts respectively and $\lambda_i$ is a lattice parameter.

We use the Miwa correspondence \eqref{miwa} with $c = 1$ to identify discrete steps with continuous time shifts
\begin{align*}
\Q{k} &= \q{k}(t_1,t_2,t_3,\ldots) , \\
\Q{k}_i &= \q{k}\!\left(t_1 + \lambda_i,t_2 - \frac{\lambda_i^2}{2},t_3 + \frac{\lambda_i^3}{3},\ldots \right) , \\
\Q{k}_{-i} &= \q{k}\!\left(t_1 - \lambda_i,t_2 + \frac{\lambda_i^2}{2},t_3 - \frac{\lambda_i^3}{3},\ldots \right) .
\end{align*}
We plug these identifications into Equation \eqref{dtoda} and perform a Taylor expansion in $\lambda_i$:
\begin{align*}
&\left( -\exp\!\left( \q{k+1} - \q{k} \right) + \exp\!\left( \q{k} - \q{k-1} \right) + \q{k}_{11} \right) \lambda_i \\
&\qquad + \left(\exp\!\left( \q{k+1} - \q{k} \right) \q{k+1}_1 - \exp\!\left( \q{k} - \q{k-1} \right) \q{k-1}_1 + \q{k}_1 \q{k}_{11} - \q{k}_{12} \right) \lambda_i^{2} = \O(\lambda_i^3),
\end{align*}
where the subscripts $1$ and $2$ of $q$ are a shorthand for $t_1$ and $t_2$, denoting partial derivatives. As long as one remembers that discrete fields are printed in upper case and continuous fields in lower case, there should be no confusion between partial derivatives and lattice shifts.
In the leading order term we recognize the first Toda equation 
\begin{equation}\label{toda1}
\q{k}_{11} = \exp\!\left( \q{k+1} - \q{k} \right) - \exp\!\left( \q{k} - \q{k-1} \right) .
\end{equation}
Using this equation, we find that the coefficient of $\lambda_i^2$ is
\begin{align*}
& \exp\!\left( \q{k+1} - \q{k} \right) \q{k+1}_1 - \exp\!\left( \q{k} - \q{k-1} \right) \q{k-1}_1 + \q{k}_1 \q{k}_{11} - \q{k}_{12} \\
&= \exp\!\left( \q{k+1} - \q{k} \right) \left( \q{k+1}_1 - \q{k}_1 \right)  - \exp\!\left( \q{k} - \q{k-1} \right) \left( \q{k-1}_1 - \q{k}_1 \right) + 2 \q{k}_1 \q{k}_{11} - \q{k}_{12} \\
&= \D_{t_1} \!\left( \exp\!\left( \q{k+1} - \q{k} \right) + \exp\!\left( \q{k} - \q{k-1} \right) + \left( \q{k}_1 \right)^2 - \q{k}_2 \right).
\end{align*}
Under the differentiation one can recognize the second Toda equation
\begin{equation}\label{toda2}
\q{k}_2 = \left( \q{k}_1 \right)^2 + \exp\!\left( \q{k+1} - \q{k} \right) + \exp\!\left( \q{k} - \q{k-1} \right) .
\end{equation}
Similarly, the higher order terms correspond to the subsequent equations of the Toda hierarchy.

\subsubsection{Pluri-Lagrangian structure}

A pluri-Lagrangian structure for the discrete Toda equation was studied in \cite{boll2013multi}. The Lagrangian is given by
\begin{equation}\label{dtoda-lag}
\begin{split}
L \big( Q,Q_i,\lambda_i \big) &= \frac{1}{\lambda_i} \sum_k \left( \exp\!\left( \Q{k}_i - \Q{k} \right) - 1 - \left( \Q{k}_i - \Q{k} \right) \right) \\
&\quad - \lambda_i \sum_k \exp\!\left( \Q{k} - \Q{k-1}_i \right) .
\end{split}
\end{equation}
Performing a Taylor expansion and applying the Euler-Maclaurin formula as in Section \ref{sec-pluri-limit}, we obtain
\[\cL_\miwa([q],\lambda_i) = \sum_{j=1}^\infty (-1)^{j+1} \frac{\lambda_i^j}{j} \cL_j[q] \]
with coefficients
\begin{align*}
\cL_1 &= \sum_k \left( \frac{1}{2} \left( \q{k}_1 \right)^2 - \exp\!\left( \q{k} - \q{k-1} \right) \right) , \\
\cL_2 &= \sum_k \left( \q{k}_1 \q{k}_2 - \frac{1}{3}\left( \q{k}_1 \right)^3 - \left( \q{k}_1 + \q{k-1}_1 \right) \exp\!\left( \q{k} - \q{k-1} \right) \right) , \\
\cL_3 &= \sum_k \bigg( 
- \frac{1}{4} \left(  \left( \q{k+1}_1 \right)^2 + 4 \q{k+1}_1 \q{k}_1 + \left( \q{k}_1 \right)^{2} + \q{k+1}_{11} \right) \exp\!\left( \q{k+1} - \q{k} \right) \\
&\hspace{1.5cm}
+ \frac{1}{4} \left( - \q{k+1}_{11} + \q{k}_{11} - 3 \q{k}_{2} - 3 \q{k+1}_2 \right) \exp\!\left( \q{k+1} - \q{k} \right) \\
&\hspace{1.5cm}
 +  \frac{1}{8} \left( \q{k}_1 \right)^4 - \frac{3}{4} \left( \q{k}_1 \right)^2 \q{k}_2 - \frac{1}{8} \left( \q{k}_{11} \right)^2  + \frac{3}{8} \left( \q{k}_2 \right)^2 + \q{k}_1 \q{k}_3 \bigg) , \\
&\ \vdots
\end{align*}
By Theorem \ref{thm}, these are the coefficients of a pluri-Lagrangian 1-form $\cL = \sum_i \cL_i \,\d t_i$ for the Toda hierarchy \eqref{toda1}, \eqref{toda2}, $\cdots$.

Note that $\cL_2$ contains derivatives with respect to $t_1$, which are alien. However, there is no equation in the hierarchy that can be used to eliminate these derivatives. In this example we have to tolerate the alien derivative $\q{k}_1$. The next coefficient, $\cL_3$, contains second derivatives with respect to $t_1$ and derivatives with respect to $t_2$. We replace these using the first and second Toda equation and find
\begin{align*}
	\overline{\cL}_3 &= \sum_k \bigg( -\frac{1}{4} \left( \q{k}_1 \right)^{4} - \left( \left( \q{k+1}_1 \right)^2 + \q{k+1}_1 \q{k}_1 + \left( \q{k}_1 \right)^2 \right) \exp\!\left( \q{k+1} - \q{k} \right) \\
	&\hspace{1.5cm} + \q{k}_1 \q{k}_3 - \exp\!\left( \q{k+2} - \q{k} \right) - \frac{1}{2} \exp\!\left( 2 (\q{k+1} - \q{k}) \right) \bigg) .
\end{align*}
Similarly one can obtain $\overline{\cL_i}$ for $i \geq 4$. By Theorem \ref{thm-alien}, the corresponding $1$-form $\overline{\cL}$ is equivalent to $\cL$. The Lagrangian 1-form $\overline{\cL}$ is identical to the one that was found in \cite{petrera2017variational} using the variational symmetries of the Toda lattice.

\subsection{A linear quad equation}
\label{sec-Qlin} 

\subsubsection{Equation}

Consider the linear quad equation
\begin{equation}
\label{Q-lin} (\alpha_1 - \alpha_2)(U - U_{12}) = (\alpha_1 + \alpha_2)(U_1 - U_2) .
\end{equation}
It is a discrete analogue of the Cauchy-Riemann equations \cite{bobenko2015discrete} and also the linearization of the lattice potential KdV equation, which will be discussed in Section \ref{sec-H1}. Therefore all the results in this section are consequences of those in Section \ref{sec-H1}. Nevertheless, this simple quad equation is a good subject to illustrate some of the subtleties of the continuum limit procedure.

To get meaningful equations in the continuum limit, we need to write the quad equation in a suitable form. Since in the Miwa correspondence the parameter enters linearly in the $t_1$-coordinate and with higher powers in the other coordinates, the leading order of the expansion of the shifts of $U$ will only contain derivatives with respect to $t_1$. Other derivatives enter at higher orders. Since we want to obtain PDEs in the continuum limit, not ODEs, we must require that the leading order of the expansion yields a trivial equation.

Written in terms of difference quotients, Equation \eqref{Q-lin} reads
\[
\frac{U_1 - U_2}{\alpha_1 - \alpha_2} = \frac{U - U_{12}}{\alpha_1 + \alpha_2} ,
\]
but setting $U=u(t_1,\ldots)$, $U_i = u(t_1+\alpha_i,\ldots)$, etc., this would yield $u_{t_1} = -u_{t_1}$ in the leading order of the expansion. In order to avoid this, we introduce new parameters $\lambda_i = \alpha_i^{-1}$. Then Equation \eqref{Q-lin} reads
\begin{equation}
\label{Q-lin-bis} 
\left( \frac{1}{\lambda_1} - \frac{1}{\lambda_2} \right)(U - U_{12}) - \left( \frac{1}{\lambda_1} + \frac{1}{\lambda_2} \right)(U_1 - U_2) = 0.
\end{equation}
or, equivalently,
\[ \frac{\lambda_1^2 - \lambda_2^2}{\lambda_1 \lambda_2} \left( \frac{U_1 - U_2}{\lambda_1 - \lambda_2} - \frac{U_{12}-U}{\lambda_1 + \lambda_2} \right) = 0. \]
Inside the brackets we find $u_{t_1} = u_{t_1}$ in the leading order if we set $U = u(t_1,\ldots)$, $U_i = u(t_1+ \lambda_i,\ldots)$, etc., which is trivial as desired.

We use the Miwa correspondence \eqref{miwa} with $c = -2$. This choice will give us a nice normalization of the resulting differential equations. We apply the Miwa correspondence to Equation \eqref{Q-lin-bis} and expand to find a double power series in $\lambda_1$ and $\lambda_2$,
\[ \sum_{i,j} \frac{4 (-1)^{i+j}}{ij} \cF_{ij}[u] \lambda_1^i \lambda_2^j = 0 , \]
where $\cF_{ji} = -\cF_{ij}$. The factor $(-1)^{i+j} \frac{4}{ij}$ is chosen to normalize the $\cF_{0j}$, but does not influence the final result. The first few of these coefficients are
\begin{align*}
\cF_{01} &= u_{t_2}, \\
\cF_{02} &= -u_{t_1t_1t_1} + \frac{3}{2}  u_{t_1t_2} + u_{t_3}, \\
\cF_{03} &= -\frac{4}{3}  u_{t_1t_1t_1t_1} + \frac{4}{3}  u_{t_1t_3} + u_{t_2t_2} + u_{t_4}, \\
\cF_{04} &=  - u_{t_1t_1t_1t_1t_1} - \frac{5}{3} u_{t_1t_1t_1t_2} + \frac{5}{4}  u_{t_1t_2t_2} + \frac{5}{4}  u_{t_1t_4} + \frac{5}{3} u_{t_2t_3} + u_{t_5},  \\
&\ \vdots 
\end{align*}
We see that the flows corresponding to even times are trivial. In the odd orders we find a hierarchy of linear equations,
\begin{align*}
u_{t_2} = 0, \qquad
u_{t_3} = u_{t_1t_1t_1}, \qquad
u_{t_4} = 0, \qquad
u_{t_5} = u_{t_1t_1t_1t_1t_1}, \qquad \cdots .
\end{align*}
For $i \geq 1$, the equations $\cF_{ij} = 0$ are consequences of these equations.

\subsubsection{Pluri-Lagrangian structure}

The linear quad equation \eqref{Q-lin} possesses a pluri-Lagrangian structure \cite{bobenko2015discrete,king2017quantum},
\begin{equation}
\label{linlag}
L(U,U_i,U_j,U_{ij},\alpha_i,\alpha_j) = U (U_i - U_i) - \frac{1}{2} \frac{\alpha_i + \alpha_j}{\alpha_i - \alpha_j} (U_i - U_j)^2 .
\end{equation}
The following Lemma will help us put this Lagrangian in a more convenient form.

\begin{lemma}\label{lemma-nulllag1}
$L_0(U,U_i,U_j,U_{ij},\alpha_i,\alpha_j) = (U + U_{ij})(U_i - U_j)$ is a null Lagrangian (i.e.\@ its multi-time Euler-Lagrange equations are trivially satisfied)
\end{lemma}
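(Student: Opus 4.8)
The plan is to verify directly that the discrete $2$-form with coefficient $L_0 = (U + U_{ij})(U_i - U_j)$ is a null Lagrangian by checking that all four corner equations from Section~\ref{sec-pluri} vanish identically, or, better, by exhibiting $L_0$ as a discrete exact form (a difference of a discrete $1$-form). The second route is cleaner: if $L_0(\square_{ij}(\bn)) = (\Delta_i K_j - \Delta_j K_i)(\bn)$ for some discrete functions $K_i$ depending on finitely many shifts of $U$, where $\Delta_i$ is the forward difference in direction $i$, then $\sum_{\square \in \Gamma} L_0$ telescopes to a boundary term for every discrete surface $\Gamma$, so every variation vanishing on $\partial\Gamma$ leaves the action unchanged.

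First I would try to guess the potential. Write $U_i - U_j = (U_i - U) - (U_j - U) = \Delta_i U - \Delta_j U$, so $L_0 = (U + U_{ij})(\Delta_i U - \Delta_j U)$. The factor $U + U_{ij}$ is symmetric under $i \leftrightarrow j$, which fits the antisymmetry of a discrete exterior derivative. A natural ansatz is $K_i = c_1 U U_i + c_2 U_i^2 + \dots$ (or involving $U^2$), and one computes $\Delta_j K_i = K_i(\bn + \fe_j) - K_i(\bn)$ and antisymmetrizes. Since $U + U_{ij} = U + \Delta_i\Delta_j U + \Delta_i U + \Delta_j U$ type expansions are available, a short calculation should pin down the coefficients; I expect something like $K_i = \frac12 U_i^2 - \frac12 U^2$ combined with a cross term, or more simply that $L_0$ equals $\Delta_i(\text{something}_j) - \Delta_j(\text{something}_i)$ with the ``something'' quadratic in $U$ and its single shifts.

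Alternatively, and perhaps what the authors intend given the surrounding discussion, one simply substitutes $L_0$ into the explicit corner equations displayed for $d=2$ in Section~\ref{sec-pluri} and checks that each of the four expressions reduces to $0$ after cancellation. This is a finite, mechanical computation: each corner equation is a sum of (at most three) evaluations of $L_0$ on adjacent plaquettes, differentiated with respect to one vertex value; since $L_0$ is bilinear in vertex values the derivatives are trivial and the cancellations are telescoping. One should also note that $L_0$ does not depend on the parameters $\alpha_i, \alpha_j$ at all, which removes any subtlety coming from the parameter dependence.

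The main obstacle is essentially bookkeeping: making sure the shifted copies of $L_0$ on the three plaquettes meeting at a corner are written with the correct vertex labels and orientations (the signs in the corner equations matter), and then seeing the cancellation. There is no conceptual difficulty — null Lagrangians of this telescoping type are standard — so the write-up amounts to either displaying the potential $K_i$ explicitly and invoking the telescoping argument, or presenting the four short cancellations. I would favour giving the potential, since it simultaneously proves the claim and makes transparent \emph{why} $L_0$ is null, and it is the form most useful for the subsequent manipulation of the Lagrangian \eqref{linlag}.
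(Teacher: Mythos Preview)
Your approach is exactly the paper's: it exhibits $L_0$ as the discrete exterior derivative of the $1$-form $\eta(U,U_i)=UU_i$, i.e.\ $K_i = UU_i$ in your notation, and then invokes the telescoping argument. So just take $c_1=1$ and all other coefficients zero in your ansatz; the verification $\Delta_i K_j - \Delta_j K_i = U_iU_{ij} - UU_j - U_jU_{ij} + UU_i = (U+U_{ij})(U_i-U_j)$ is one line.
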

\begin{proof}
Consider the discrete 1-form given by $\eta(U,U_i) = U U_i$ and $\eta(U_i,U) = -U U_i$. Its discrete exterior derivative is
\[ \Delta \eta(U,U_i,U_{ij},U_j) = UU_i + U_iU_{ij} - U_{ij}U_j - U_j U = L_0.\]
Just like in the continuous case, this means that the action of $L_0$ over any discrete surface only depends on values of $U$ at the boundary of the surface. Hence all fields are critical with respect to variations in the interior.
\end{proof}
Using Lemma \ref{lemma-nulllag1}, we see that the Lagrangian \eqref{linlag} is equivalent to (denoted with $=$ by abuse of notation)
\[ L(U,U_i,U_j,U_{ij},\alpha_i,\alpha_j) = \frac{1}{2}(U_i - U_j) (U - U_{ij}) - \frac{1}{2} \frac{\alpha_i + \alpha_j}{\alpha_i - \alpha_j} (U_i - U_j)^2 , \]
or, in terms of the parameters $\lambda_k$,
\[ L(U,U_i,U_j,U_{ij},\lambda_i,\lambda_j) = \frac{1}{2}(U_i - U_j) (U - U_{ij}) + \frac{1}{2} \frac{\lambda_i + \lambda_j}{\lambda_i -\lambda_j} (U_i - U_j)^2 . \]
Since the Taylor expansion of $(U_i - U_j)^2$ contains a factor $\lambda_i - \lambda_j$, the expansion of the Lagrangian does not contain any negative order terms. In fact all zeroth order terms vanish as well, so Theorem \ref{thm} applies: the coefficients of the power series
\[\cL_\miwa([u],\lambda_1, \lambda_2) = \sum_{i,j=1}^\infty \frac{4(-1)^{i+j}}{ij} \cL_{ij}[u] \lambda_1^i \lambda_2^j \]
define a pluri-Lagrangian 2-form
\[ \cL = \sum_{1 \leq i < j \leq N} \cL_{ij} \, \d t_i \wedge \d t_j . \]

We find
\begin{align*}
\cL_{12} &= u_{t_1} u_{t_2} , \\
\cL_{13} &= -u_{t_1} u_{t_1t_1t_1} + \frac{3}{4} u_{t_2}^{2} + u_{t_1} u_{t_3} , \\
\cL_{23} &= -u_{t_1} u_{t_1t_1t_2} + u_{t_1t_1} u_{t_1t_2} - 2 u_{t_1t_1t_1} u_{t_2} - 3 u_{t_1t_2} u_{t_2} - 3 u_{t_1} u_{t_2t_2} + u_{t_2} u_{t_3} , \\
&\ \vdots
\end{align*}
We will not study this example in more detail. Instead we move on to one of its non-linear cousins.

\subsection{Lattice potential KdV (H1)}
\label{sec-H1}

The computations below were performed in the SageMath software system \cite{sagemath}. The code is available at \url{https://github.com/mvermeeren/pluri-lagrangian-clim}.

\subsubsection{Equation}

Consider equation H1 from the ABS list \cite{adler2003classification}, also known as the lattice potential Korteweg-de Vries (lpKdV) equation,
\begin{equation}\label{H1}
(V_{12} - V)(V_2 - V_1) = \alpha_1 - \alpha_2 .
\end{equation}
We would like write Equation \eqref{H1} in terms of difference quotients. To achieve this, we identify $\alpha_1 = -\lambda_1^{-2}$ and $\alpha_2 = -\lambda_2^{-2}$. Then Equation \eqref{H1} is equivalent to
\[ \frac{V_{12} - V}{\lambda_1 + \lambda_2} \frac{V_2 - V_1}{\lambda_2 - \lambda_1} = \frac{1}{\lambda_1^2 \lambda_2^2} . \]
The left hand side is now a product of meaningful difference quotients, but the right hand side explodes as the parameters tend to zero. (Setting $\alpha_i = -\lambda_i^2$ instead would cause the same problem as in the first attempt of Section \ref{sec-Qlin}.) To avoid this we make a non-autonomous change of variables
\[ V(n_1,\ldots,n_N) = U(n_1,\ldots,n_N) + \frac{n_1 }{\lambda_1} + \ldots  \frac{n_N}{\lambda_N} . \]
Then the lpKdV equation takes the form
\begin{equation}\label{lpKdV}
\left( \frac{1}{\lambda_1} + \frac{1}{\lambda_2} + U_{12} - U \right)\left( \frac{1}{\lambda_2} - \frac{1}{\lambda_1}  + U_2 - U_1 \right) = \frac{1}{\lambda_2^2} - \frac{1}{\lambda_1^2} .
\end{equation}
This is the form in which the lpKdV equation was originally found and studied, usually with parameters $p = \lambda_1^{-1}$ and $q = \lambda_2^{-1}$, see \cite{nijhoff1995discrete} for an overview. 
In terms of difference quotients, the equation reads 
\[ \frac{U_{12} - U}{\lambda_1 + \lambda_2} - \frac{U_2 - U_1}{\lambda_2 - \lambda_1} - 
\lambda_1 \lambda_2 \frac{U_{12} - U}{\lambda_1 + \lambda_2} \frac{U_2 - U_1}{\lambda_2 - \lambda_1} = 0 . \]
If we identify $U = u(t_1,\ldots)$, $U_i = u(t_1 + \lambda_i, \ldots)$, etc., then the negative powers of the parameters cancel. In the leading we find the tautological equation $u_{t_1} - u_{t_1} = 0$. Therefore, this form of the difference equation is a suitable candidate for the continuum limit.

Again we use the Miwa correspondence \eqref{miwa} with $c = -2$. From Equation \eqref{lpKdV} we find a double power series in $\lambda_1$ and $\lambda_2$,
\[ \sum_{i,j} \frac{4 (-1)^{i+j}}{ij} \cF_{ij}[u] \lambda_1^i \lambda_2^j = 0 , \]
where $\cF_{ji} = -\cF_{ij}$. The first few of these coefficients are
\begin{align*}
\cF_{01} &= u_{2}, \\
\cF_{02} &= -3  u_{1}^{2} - u_{111} + \frac{3}{2}  u_{12} + u_{3}, \\
\cF_{03} &= -8  u_{1} u_{11} - 4  u_{1} u_{2} - \frac{4}{3}  u_{1111} + \frac{4}{3}  u_{13} + u_{22} + u_{4}, \\
\cF_{04} &= - 5  u_{11}^2 - \frac{20}{3} u_{1} u_{111} - 10  u_{1} u_{12} - 5  u_{11} u_{2} - \frac{5}{4} u_{2}^2 + \frac{10}{3}  u_{1} u_{3} - u_{11111} \\
&\qquad - \frac{5}{3} u_{1112} + \frac{5}{4}  u_{122} + \frac{5}{4}  u_{14} + \frac{5}{3} u_{23} + u_{5},  \\
&\ \vdots 
\end{align*}
where once again we use the subscript $i$ rather than $t_i$ to denote partial derivatives of $u$. We see that the flows corresponding to even times are trivial. In the odd orders we find the pKdV equations,
\begin{align*}
u_{2} &= 0, \\
u_{3} &= 3 u_{1}^2 + u_{111}, \\
u_{4} &= 0, \\
u_{5} &= 10 u_{1}^3 + 5 u_{11}^2 + 10 u_{1} u_{111} + u_{11111}, \\
&\ \vdots
\end{align*}
For $i \geq 1$, the equations $\cF_{ij} = 0$ are consequences of these equations.

\subsubsection{Pluri-Lagrangian structure}

A pluri-Lagrangian description of Equation \eqref{H1} was found in \cite{lobb2009lagrangian}, the Lagrange function itself goes back to \cite{capel1991complete}. It reads
\[L(V,V_i,V_j,V_{ij},\alpha_i,\alpha_j) = V(V_i -V_j) - (\alpha_i - \alpha_j) \log(V_i-V_j) . \]
Using Lemma \ref{lemma-nulllag1}, we see that this Lagrangian is equivalent to (denoted with $=$ by abuse of notation)
\[ L(V,V_i,V_j,V_{ij},\alpha_i,\alpha_j) = \frac{1}{2}(V - V_{ij})(V_i - V_j) + (\alpha_i - \alpha_j) \log(V_i-V_j) . \]
In terms of $U$ and $\lambda$ it is (up to a constant)
\begin{align*}
L(U,U_i,U_j,U_{ij},\lambda_i,\lambda_j) &= \frac{1}{2} \left( U - U_{ij} - \lambda_i^{-1} - \lambda_j^{-1} \right)\left( U_i - U_j + \lambda_i^{-1} - \lambda_j^{-1} \right) \\
&\qquad + \left( \lambda_i^{-2} - \lambda_j^{-2} \right) \log\!\left(1 + \frac{U_i-U_j}{\lambda_i^{-1} - \lambda_j^{-1}} \right). 
\end{align*}
\begin{lemma}\label{lemma-nulllag2}
$L_0(U,U_i,U_j,U_{ij},\alpha_i,\alpha_j) = (\lambda_i^{-1} + \lambda_j^{-1})(U_i - U_j) + (\lambda_i^{-1} - \lambda_j^{-1})(U - U_{ij})$ is a null Lagrangian.
\end{lemma}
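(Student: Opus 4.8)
The plan is to imitate the proof of Lemma \ref{lemma-nulllag1} verbatim: I would exhibit $L_0$ as the discrete exterior derivative $\Delta\eta$ of a suitable discrete $1$-form $\eta$, and then conclude, by discrete Stokes, that the action $\sum_{\square \in \Gamma} L_0$ over any discrete $2$-surface $\Gamma$ depends only on the values of $U$ on $\partial\Gamma$. Hence all fields are critical with respect to variations in the interior, which is exactly the assertion that the multi-time Euler--Lagrange (corner) equations are satisfied trivially.

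To find $\eta$, I would be guided by the structure of $L_0$: it is linear in the values of $U$, and its coefficients are $\lambda_i^{-1}\pm\lambda_j^{-1}$. This suggests a $1$-form whose value on an edge in the $i$-th lattice direction is linear in the two endpoint values and carries a factor $\lambda_i^{-1}$. The natural candidate is
\[ \eta(U,U_i) = \frac{1}{\lambda_i}(U + U_i), \qquad \eta(U_i,U) = -\frac{1}{\lambda_i}(U + U_i). \]
The verification is then a two-line computation: summing $\eta$ around the boundary of the elementary square $\square_{ij}(\bn)$ with the orientation fixed in the proof of Lemma \ref{lemma-nulllag1}, the $\lambda_i^{-1}$-terms collect into $\lambda_i^{-1}(U + U_i - U_j - U_{ij})$ and the $\lambda_j^{-1}$-terms into $\lambda_j^{-1}(U_i + U_{ij} - U - U_j)$, and regrouping gives exactly $(\lambda_i^{-1}+\lambda_j^{-1})(U_i - U_j) + (\lambda_i^{-1}-\lambda_j^{-1})(U - U_{ij}) = L_0$.

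I do not anticipate a genuine obstacle. The only point requiring care is keeping the edge orientations consistent with the convention used in Lemma \ref{lemma-nulllag1}, so that the four boundary terms enter $\Delta\eta$ with the correct signs; a mismatch there would be the likeliest source of a sign error. Should the first guess for $\eta$ fail, the fallback is to take an ansatz $\eta(U,U_i) = a_i U + b_i U_i$ with undetermined coefficients, impose $\Delta\eta = L_0$, and solve the resulting (underdetermined, hence easily solvable) linear system for $a_i, b_i$.
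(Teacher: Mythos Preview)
Your proposal is correct and follows exactly the paper's own proof: the paper defines the same discrete $1$-form $\eta(U,U_i,\lambda_i) = \lambda_i^{-1}(U + U_i)$, computes $\Delta\eta = L_0$ by summing around the elementary square, and concludes that $L_0$ is a null Lagrangian.
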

\begin{proof}
Consider the discrete 1-form $\eta$ defined by $\eta(U,U_i,\lambda_i) = \lambda_i^{-1} (U +  U_i)$ and, dictated by symmetry, $\eta(U_i,U,\lambda_i) = -\lambda_i^{-1} (U + U_i)$. Its discrete exterior derivative is
\[ 
\Delta \eta(U,U_i,U_{ij},U_j,\lambda_i,\lambda_j) 
= \frac{U+U_i}{\lambda_i} + \frac{U_i+U_{ij}}{\lambda_j} - \frac{U_{ij}+U_j}{\lambda_i} - \frac{U_j+U}{\lambda_j} \\
= L_0. \qedhere
\]
\end{proof}

Lemma \ref{lemma-nulllag2} implies that $L$ is equivalent to
\begin{equation}\label{L-lpKdV}
\begin{split}
L(U,U_i,U_j,U_{ij},\lambda_i,\lambda_j) &= \frac{1}{2} \left( U - U_{ij} - 2 \lambda_i^{-1} - 2 \lambda_j^{-1} \right) (U_i - U_j) \\
&\qquad + \left( \lambda_i^{-2} - \lambda_j^{-2} \right) \log\!\left(1 + \frac{U_i-U_j}{\lambda_i^{-1} - \lambda_j^{-1}} \right) .
\end{split}
\end{equation}
To see why this Lagrangian is preferable, do a first order Taylor expansion of the logarithm and admire the cancellation. Thanks to this cancellation we avoid terms of non-positive order in the series expansion.

Applying the Miwa correspondence \eqref{miwa} with $c = -2$, a Taylor expansion, and the Euler-Maclaurin formula to the Lagrangian \eqref{L-lpKdV}, we obtain a power series
\[\cL_\miwa([u],\lambda_1, \lambda_2) = \sum_{ij=1}^\infty \frac{4 (-1)^{i+j}}{ij} \cL_{ij}[u] \lambda_1^i \lambda_2^j, \]
whose coefficients define a continuous pluri-Lagrangian 2-form for the KdV hierarchy. The first row of coefficients reads:
\begin{align*}
\cL_{12} &= u_{1} u_{2}
\\
\cL_{13} &=  -2 u_{1}^{3} - u_{1} u_{111} + \frac{3}{4} u_{2}^{2} + u_{1} u_{3}
\\
\cL_{14} &= -4 u_{1}^{2} u_{2} - \frac{4}{3} u_{1} u_{112} - \frac{2}{3} u_{11} u_{12} - \frac{2}{3} u_{111} u_{2} + \frac{4}{3} u_{2} u_{3} + u_{1} u_{4}
\\
\cL_{15} &= \frac{10}{3} u_{1} u_{11}^{2} - \frac{5}{2} u_{1} u_{2}^{2} - \frac{10}{3} u_{1}^{2} u_{3} + \frac{5}{9} u_{11} u_{1111} + \frac{1}{9} u_{1} u_{11111} - \frac{10}{9} u_{1} u_{113} - \frac{5}{6} u_{12}^{2} \\
&\quad - \frac{5}{12} u_{1} u_{122} - \frac{5}{9} u_{11} u_{13} - \frac{5}{6} u_{112} u_{2} - \frac{5}{12} u_{11} u_{22} - \frac{5}{9} u_{111} u_{3} + \frac{5}{9} u_{3}^{2} + \frac{5}{4} u_{2} u_{4} + u_{1} u_{5} \\
&\ \vdots
\end{align*}
Note that we can get rid of the alien derivatives in each $\cL_{1j}$ by adding a total derivative $\D_{t_1} c_j$ and discarding terms that have a double zero on solutions. To make sure we get an equivalent Lagrangian 2-form, we also add $\D_{t_i} c_j$ to the coefficients $\cL_{ij}$, which amounts to adding the closed form $\d \!\left( \sum_j c_j \d t_j \right)$ to $\cL$. In this particular example we take
\begin{align*}
\sum_j c_j \d t_j &= 
\left(\frac{4}{3} u_{1} u_{12} - \frac{2}{3} u_{11} u_{2} \right) \d t_4 \\
&\quad + \left( \frac{10}{3} u_{1}^{2} u_{11} - \frac{4}{9} u_{11} u_{111} - \frac{1}{9} u_{1} u_{1111} + \frac{10}{9} u_{1} u_{13} + \frac{5}{12} u_{1} u_{22} - \frac{5}{9} u_{11} u_{3} \right) \d t_5  \\
&\quad + \ldots .
\end{align*}
Now that we have disposed of the alien derivatives in the $\cL_{1j}$, we can use Theorem \ref{thm-alien} to eliminate the remaining alien derivatives in all other $\cL_{ij}$. For $i < j \leq 5$, the coefficients obtained this way are displayed in Table \ref{table-H1-clean}.

This procedure of eliminating alien derivatives can be done algorithmically: 
\begin{enumerate}
	\item Eliminate products of alien derivatives by adding double zeros. Using the notation of Equation \eqref{fullel}, we replace
	\[ u_I u_J \mapsto u_I u_J - (u_I - f_I[u]) (u_J - f_J[u]) = u_I f_J[u] + u_J f_I[u] - f_I[u] f_J[u] .\]
	\item In each equivalence class of single terms modulo $t_1$-derivatives, choose a preferred representative (e.g.\@ minimize the order of the highest occurring $t_1$-derivative in the term). Then choose the $c_j$ such that all terms in the $\cL_{1j}$ are brought in this preferred form. This simplifies $\cL_{1j}$ by combining equivalent terms, but there is no guarantee that it will remove all alien derivatives.
\end{enumerate}
Although we do not have a proof that this algorithm eliminates all alien derivatives, it works for the examples considered here and for all other ABS equations that we have been able to construct the continuum limit of.

\begin{table}[t]
\begin{empheq}[box=\fbox]{align*}
\cL_{12} & =  u_{1} u_{2} 
\\
\cL_{13} & = -2 u_{1}^{3} - u_{1} u_{111} + u_{1} u_{3} 
\\
\cL_{14} & = u_{1} u_{4}
\\
\cL_{15} & = -5 u_{1}^{4} + 10 u_{1} u_{11}^{2} - u_{111}^{2} + u_{1} u_{5}
\\[1em]
\cL_{23} & =  -3 u_{1}^{2} u_{2} - u_{1} u_{112} + u_{11} u_{12} - u_{111} u_{2}
\\
\cL_{24} & = 0
\\
\cL_{25} & = -10 u_{1}^{3} u_{2} + 20 u_{1} u_{11} u_{12} - 5 u_{11}^{2} u_{2} - 10 u_{1} u_{111} u_{2} - 2 u_{111} u_{112} + 2 u_{1111} u_{12} \\
&\quad - u_{11111} u_{2}
\\[1em]
\cL_{34} & = 3 u_{1}^{2} u_{4} + u_{1} u_{114} - u_{11} u_{14} + u_{111} u_{4}
\\*
\cL_{35} & = 6 u_{1}^{5} - 15 u_{1}^{2} u_{11}^{2} + 20 u_{1}^{3} u_{111} - 10 u_{1}^{3} u_{3} + 7 u_{11}^{2} u_{111} + 6 u_{1} u_{111}^{2} - 12 u_{1} u_{11} u_{1111} \\
&\quad + 3 u_{1}^{2} u_{11111} + 20 u_{1} u_{11} u_{13} - 5 u_{11}^{2} u_{3} - 10 u_{1} u_{111} u_{3} + 3 u_{1}^{2} u_{5} - u_{1111}^{2} \\
&\quad + u_{111} u_{11111} - 2 u_{111} u_{113} + u_{1} u_{115} + 2 u_{1111} u_{13} - u_{11} u_{15} - u_{11111} u_{3} + u_{111} u_{5} 
\\[1em]
\cL_{45} & = -10 u_{1}^{3} u_{4} + 20 u_{1} u_{11} u_{14} - 5 u_{11}^{2} u_{4} - 10 u_{1} u_{111} u_{4} - 2 u_{111} u_{114} + 2 u_{1111} u_{14} \\
&\quad - u_{11111} u_{4}
\end{empheq}
\caption{Coefficients $\cL_{ij}$ for H1, after eliminating alien derivatives.}
\label{table-H1-clean}
\end{table}

Note that the equations $u_{2i} = 0$ restrict the dynamics to a space of half the dimension. We can also restrict the pluri-Lagrangian formulation to this space:
\[ \cL = \sum_{i,j} \cL_{2i+1,2j+1} \,\d t_{2i+1} \wedge \d t_{2j+1} \]
is a pluri-Lagrangian $2$-form for the hierarchy of non-trivial pKdV equations,
\begin{align*}
u_{3} &= 3 u_{1}^2 + u_{111}, \\
u_{5} &= 10 u_{1}^3 + 5 u_{11}^2 + 10 u_{1} u_{111} + u_{11111}, \\
&\ \vdots
\end{align*}
On the level of equations we could have restricted to the odd-numbered coordinates $t_1,t_3,\ldots$ from the beginning. However, on the level of Lagrangians we need to consider the even-numbered coordinates as well, at least in the theoretical arguments, because otherwise there is no interpretation for the coefficients of $\lambda_1^{2i} \lambda_2^{2j}$ in the power series $\cL_\miwa$, which usually do not vanish before eliminating alien derivatives.

\subsubsection{The double continuum limit of Wiersma and Capel}\label{sec-wc}

In \cite{wiersma1987lattice} Wiersma and Capel presented a continuum limit of the lpKdV equation
\begin{equation}\label{dkdv}
(\mu_1 + \mu_2 + U_{12} - U)(\mu_1 - \mu_2  + U_1 - U_2) = \mu_1^2 - \mu_2^2 ,
\end{equation}
which is equivalent to equation \eqref{lpKdV} under the transformation $\mu_i = \lambda_i^{-1}$. Their procedure consists of two steps. First they obtain a hierarchy of differential-difference equations. A second continuum limit, applied to any single equation of this hierarchy, then yields the potential KdV hierarchy. Some ideas concerning this limit procedure were already developed in \cite{nijhoff1983direct, quispel1984linear}. Here we will summarize both limits in one step.

The limit procedure from \cite{wiersma1987lattice} uses the lattice parameters $\nu = \mu_1 - \mu_2$ and $\mu_1$ itself, and skew lattice coordinates:
\[ V(n,m) = U(n-m,m) . \]
Consider an interpolating function $u$. If
\[ V(n,m) = U(n-m,m) = u (t_1, t_3, t_5, \ldots), \]
then after the double limit of \cite{wiersma1987lattice}, lattice shifts correspond to multi-time shifts as follows:
\[ V_1 = U_1 = u \bigg( t_1 - \frac{2}{\mu_1}, t_3 - \frac{2}{3 \mu_1^3}, t_5 - \frac{2}{5 \mu_1^5}, \ldots \bigg) \]
and
\begin{align*}
V_2 = U_{-1,2} = u\bigg( & t_1 + \nu \frac{2}{\mu_1^2} - \frac{\nu^2}{2} \frac{2}{\mu_1^3} + \frac{\nu^3}{3} \frac{2}{p^4} - \ldots, t_3 + \nu \frac{2}{\mu_1^4} - \frac{\nu^2}{2} \frac{4}{\mu_1^5} + \frac{\nu^3}{3} \frac{20}{3\mu_1^6} - \ldots, \\
& t_5 + \nu \frac{2}{\mu_1^6} - \frac{\nu^2}{2} \frac{6}{\mu_1^7} + \frac{\nu^3}{3} \frac{14}{\mu_1^8} - \ldots , \ldots \bigg).
\end{align*}
The series occurring here can be recognized as Taylor expansions:
\begin{align*}
V_2 = u\bigg( t_1 -  \left( \frac{2}{\mu_1+\nu} - \frac{2}{\mu_1} \right),
\,& t_3 - \frac{1}{3}\left( \frac{2}{(\mu_1+\nu)^3} - \frac{2}{\mu_1^3} \right), \\
& t_5 - \frac{1}{5}\left( \frac{2}{(\mu_1+\nu)^5} - \frac{2}{\mu_1^5} \right), \ldots \bigg) .
\end{align*}
Going back to the straight lattice coordinates and the original lattice parameters $\mu_1$ and $\mu_2 = \mu_1 + \nu$, we find
\begin{equation}\label{oddmiwa}
\begin{split}
U_2 & = V_{12} = u \bigg( t_1 - \frac{2}{\mu_2}, t_3 - \frac{2}{3\mu_2^3}, t_5 - \frac{2}{5\mu_2^5}, \ldots \bigg) , \\
U_1  & = V_{1\phantom{2}} = u \bigg( t_1 - \frac{2}{\mu_1}, t_3 - \frac{2}{3 \mu_1^3}, t_5 - \frac{2}{5 \mu_1^5}, \ldots \bigg) .
\end{split}
\end{equation}
Hence the end result of the double limit of Wiersma and Capel is the same as the limit we obtain using the odd-numbered Miwa variables only.

\subsection[Cross-ratio equation]{Cross-ratio equation (Q1$_{\delta=0}$)}

\subsubsection{Equation}

Consider equation $Q1$ from the ABS list \cite{adler2003classification}, with parameter $\delta = 0$, 
\begin{equation}\label{Q1}
\alpha_1 (V_2 - V)(V_{12} - V_1) - \alpha_2 (V_1 - V)(V_{12} - V_2) = 0 .
\end{equation}
It is also known as the \emph{cross-ratio equation} \cite{bobenko2002integrable,nijhoff1995discrete} and as the \emph{lattice Schwarzian KdV equation} \cite[Chapter 3]{hietarinta2016discrete}. As before, we would like to view Equation \eqref{Q1} as a consistent numerical discretization of some differential equation. To achieve this, we identify $\alpha_1 = \lambda_1^2$ and $\alpha_2 = \lambda_2^2$. Then Equation \eqref{H1} is equivalent to
\begin{equation}\label{lSKdV}
\frac{V_1 - V}{\lambda_1}\frac{V_{12} - V_2}{\lambda_1}  - \frac{V_2 - V}{\lambda_2} \frac{V_{12} - V_1}{\lambda_2} = 0 .
\end{equation}
If we identify $V = v(t_1,\ldots)$, $V_i = v(t_1 + \lambda_i,\ldots)$, etc., then the leading order expansion yields $v_{t_1}^2 - v_{t_1}^2 = 0$. This is a tautological equation, as desired. Hence in this case there is no need for an additional change of variables. 

Once more we use the the Miwa correspondence \eqref{miwa} with $c = -2$. A Taylor expansion of \eqref{lSKdV} yields
\[ \sum_{i,j} \frac{4(-1)^{i+j}}{ij} \cF_{ij}[v] \lambda_1^i \lambda_2^j = 0 \]
with
\begin{align*}
\cF_{01} &= v_{1} v_{2}, \\
\cF_{02} &= \frac{3}{2}  v_{11}^{2} - v_{1} v_{111} +
\frac{3}{2}  v_{1} v_{12} + \frac{3}{2}  v_{11} v_{2} + \frac{3}{8} v_{2}^{2} + v_{1} v_{3}, \\
\cF_{03} &= \frac{8}{3}  v_{11} v_{111} - \frac{4}{3}
 v_{1} v_{1111} + 4  v_{11} v_{12} + \frac{4}{3}  v_{1} v_{13} + \frac{4}{3}  v_{111} v_{2} + 2  v_{12} v_{2} + v_{1} v_{22} + \frac{4}{3}  v_{11} v_{3} \\
&\qquad + \frac{2}{3}  v_{2} v_{3} + v_{1} v_{4}, \\
\cF_{04} &= -\frac{10}{9}  v_{111}^{2} - \frac{5}{3}  v_{11} v_{1111} + v_{1} v_{11111} + \frac{5}{3}  v_{1} v_{1112} - 5  v_{11} v_{112} - \frac{10}{3} v_{111} v_{12} - \frac{5}{2}  v_{12}^{2} \\
&\qquad - \frac{5}{4} v_{1} v_{122} - \frac{10}{3}  v_{11} v_{13} - \frac{5}{4}  v_{1} v_{14} - \frac{5}{6}  v_{1111} v_{2} - \frac{5}{2}  v_{112} v_{2} - \frac{5}{3}  v_{13} v_{2} - \frac{5}{4}  v_{11} v_{22} \\
&\qquad - \frac{5}{8} v_{2} v_{22} - \frac{5}{3}  v_{1} v_{23} - \frac{10}{9}  v_{111} v_{3} - \frac{5}{3}  v_{12} v_{3} - \frac{5}{18}  v_{3}^{2} - \frac{5}{4}  v_{11} v_{4} - \frac{5}{8}  v_{2} v_{4} - v_{1} v_{5},  \\
&\ \vdots 
\end{align*}
We assume that $v_{1} \neq 0$. Then we see that the flows corresponding to even times are trivial. In the odd orders we find the hierarchy of Schwarzian KdV equations,
\begin{equation}\label{SKdVhier}
\begin{split}
v_{2} &= 0, \\
\frac{v_{3}}{v_{1}} &= -\frac{3 v_{11}^{2}}{2 v_{1}^{2}} + \frac{v_{111}}{v_{1}}
, \\
v_{4} &= 0, \\
\frac{v_{5}}{v_{1}} &= -\frac{45  v_{11}^{4}}{8  v_{1}^{4}} + \frac{25  v_{11}^{2} v_{111}}{2  v_{1}^{3}} - \frac{5  v_{111}^{2}}{2  v_{1}^{2}} - \frac{5  v_{11} v_{1111}}{v_{1}^{2}} + \frac{v_{11111}}{v_{1}}, \\
&\ \vdots
\end{split}
\end{equation}
For $i \geq 1$, the equations $\cF_{ij} = 0$ are differential consequences of these equations.

\subsubsection{Pluri-Lagrangian structure}

\begin{table}[p]
\begin{empheq}[box=\fbox]{align*}
\cL_{12} & = -\frac{{v_{2}}}{4 {v_{1}}} \\
\cL_{13} & = \frac{{v_{11}}^{2}}{4 {v_{1}}^{2}} - \frac{{v_{3}}}{4 {v_{1}}} \\
\cL_{14} & =  -\frac{{v_{4}}}{4 {v_{1}}} \\
\cL_{15} & =  \frac{3 {v_{11}}^{4}}{16 {v_{1}}^{4}} - \frac{{v_{111}}^{2}}{4 {v_{1}}^{2}} - \frac{{v_{5}}}{4 {v_{1}}}
\\[.8em]
\cL_{23} & =  \frac{{v_{11}} {v_{12}}}{2 {v_{1}}^{2}} + \frac{{v_{11}}^{2} {v_{2}}}{8 {v_{1}}^{3}} - \frac{{v_{111}} {v_{2}}}{4 {v_{1}}^{2}} \\
\cL_{24} & = 0 \\
\cL_{25} & =  -\frac{{v_{111}} {v_{112}}}{2 {v_{1}}^{2}} + \frac{3 {v_{11}}^{3} {v_{12}}}{4 {v_{1}}^{4}} - \frac{{v_{11}} {v_{111}} {v_{12}}}{{v_{1}}^{3}} + \frac{{v_{1111}} {v_{12}}}{2 {v_{1}}^{2}} + \frac{27 {v_{11}}^{4} {v_{2}}}{32 {v_{1}}^{5}} 
 - \frac{17 {v_{11}}^{2} {v_{111}} {v_{2}}}{8 {v_{1}}^{4}} \\
&\quad + \frac{7 {v_{111}}^{2} {v_{2}}}{8 {v_{1}}^{3}} + \frac{3 {v_{11}} {v_{1111}} {v_{2}}}{4 {v_{1}}^{3}} - \frac{{v_{11111}} {v_{2}}}{4 {v_{1}}^{2}}
\\[.8em]
\cL_{34} & =  -\frac{{v_{11}} {v_{14}}}{2 {v_{1}}^{2}} - \frac{{v_{11}}^{2} {v_{4}}}{8 {v_{1}}^{3}} + \frac{{v_{111}} {v_{4}}}{4 {v_{1}}^{2}}
\\
\cL_{35} & = \frac{45 {v_{11}}^{6}}{64 {v_{1}}^{6}} - \frac{57 {v_{11}}^{4} {v_{111}}}{32 {v_{1}}^{5}} + \frac{19 {v_{11}}^{2} {v_{111}}^{2}}{16 {v_{1}}^{4}} - \frac{7 {v_{111}}^{3}}{8 {v_{1}}^{3}} + \frac{3 {v_{11}}^{3} {v_{1111}}}{8 {v_{1}}^{4}} + \frac{3 {v_{11}} {v_{111}} {v_{1111}}}{4 {v_{1}}^{3}} \\
&\quad - \frac{{v_{1111}}^{2}}{4 {v_{1}}^{2}} - \frac{3 {v_{11}}^{2} {v_{11111}}}{8 {v_{1}}^{3}} + \frac{{v_{111}} {v_{11111}}}{4 {v_{1}}^{2}} - \frac{{v_{111}} {v_{113}}}{2 {v_{1}}^{2}} + \frac{3 {v_{11}}^{3} {v_{13}}}{4 {v_{1}}^{4}} - \frac{{v_{11}} {v_{111}} {v_{13}}}{{v_{1}}^{3}} \\
&\quad + \frac{{v_{1111}} {v_{13}}}{2 {v_{1}}^{2}} - \frac{{v_{11}} {v_{15}}}{2 {v_{1}}^{2}} + \frac{27 {v_{11}}^{4} {v_{3}}}{32 {v_{1}}^{5}} - \frac{17 {v_{11}}^{2} {v_{111}} {v_{3}}}{8 {v_{1}}^{4}} + \frac{7 {v_{111}}^{2} {v_{3}}}{8 {v_{1}}^{3}} + \frac{3 {v_{11}} {v_{1111}} {v_{3}}}{4 {v_{1}}^{3}} \\
&\quad - \frac{{v_{11111}} {v_{3}}}{4 {v_{1}}^{2}} - \frac{{v_{11}}^{2} {v_{5}}}{8 {v_{1}}^{3}} + \frac{{v_{111}} {v_{5}}}{4 {v_{1}}^{2}}
\\[.8em]
\cL_{45} & = -\frac{{v_{111}} {v_{114}}}{2 {v_{1}}^{2}} + \frac{3 {v_{11}}^{3} {v_{14}}}{4 {v_{1}}^{4}} - \frac{{v_{11}} {v_{111}} {v_{14}}}{{v_{1}}^{3}} + \frac{{v_{1111}} {v_{14}}}{2 {v_{1}}^{2}} + \frac{27 {v_{11}}^{4} {v_{4}}}{32 {v_{1}}^{5}} - \frac{17 {v_{11}}^{2} {v_{111}} {v_{4}}}{8 {v_{1}}^{4}} \\
&\quad + \frac{7 {v_{111}}^{2} {v_{4}}}{8 {v_{1}}^{3}} + \frac{3 {v_{11}} {v_{1111}} {v_{4}}}{4 {v_{1}}^{3}} - \frac{{v_{11111}} {v_{4}}}{4 {v_{1}}^{2}} 
\end{empheq}
\caption{Coefficients $\cL_{ij}$ for Q1$_{\delta=0}$, after eliminating alien derivatives.}
\label{table-Q1-clean}
\end{table}

A Pluri-Lagrangian description of Equation \eqref{Q1} was found in \cite{lobb2009lagrangian}
\begin{equation}\label{Q1lag}
L = \alpha_i \log(V - V_i) - \alpha_j \log(V - V_i)  - (\alpha_i - \alpha_j) \log(V_i-V_j) ,
\end{equation}
which is equivalent to
\[
L = \lambda_i^2 \log\!\left( \frac{V - V_i}{\lambda_i} \right) - \lambda_j^2 \log\!\left( \frac{V - V_j}{\lambda_j} \right)  - (\lambda_i^2 - \lambda_j^2) \log\!\left( \frac{V_i - V_j}{\lambda_i - \lambda_j} \right) .
\]
Each term of the series $\cL_\miwa$ constructed form this discrete Lagrangian contains strictly positive powers of both $\lambda_i$ and $\lambda_j$. Thus by Theorem \ref{thm} we can identify the coefficients of this power series with the coefficients of a pluri-Lagrangian $2$-form. Some of these coefficients, after eliminating alien derivatives, are given in Table \ref{table-Q1-clean}.

Again we can restrict the pluri-Lagrangian formulation to a space of half the dimension and consider
\[ \cL = \sum_{i,j} \cL_{2i+1,2j+1} \,\d t_{2i+1} \wedge \d t_{2j+1} \]
as a pluri-Lagrangian $2$-form for the non-trivial equations of the SKdV hierarchy.

\subsubsection{The generating PDE of Nijhoff, Hone, and Joshi}

Nijhoff, Hone, and Joshi \cite{nijhoff2000schwarzian} introduced a non-autonomous PDE for a function $\mathtt{z}_{n,m}(t,s)$ depending on a pair of continuous variables $(s,t)$, and a pair of parameters $(m,n)$. They noted that the flow of this PDE in continuous $(s,t)$-coordinates commutes with the difference equations
\begin{equation}\label{zquad}
\frac{ (\mathtt{z}_{n,m} - \mathtt{z}_{n+1,m}) (\mathtt{z}_{n,m+1} - \mathtt{z}_{n+1,m+1}) }{ (\mathtt{z}_{n,m} - \mathtt{z}_{n,m+1}) (\mathtt{z}_{n+1,m} - \mathtt{z}_{n+1,m+1} ) } = \frac{s}{t} .
\end{equation}
Equation \eqref{zquad} is nothing but equation Q1$_{\delta=0}$. Hence it is possible to switch between the continuous and discrete picture by reversing the roles of parameters and independent variables. 

The main feature of the PDE in question is that it generates the SKdV hierarchy%
\footnote{Note that there is an error in the second SKdV equation as stated in \cite{nijhoff2000schwarzian}: the Lagrangian is missing the term $-\mathtt{z}_{x_2}^2 / \mathtt{z}_{x_1}^2$ at the corresponding order and in the equation itself the factor 2 of the first term should be removed.}
through the identification
\begin{align*}
\mathtt{z}_{n,m}(t,s) &= v \bigg( x_1 + \frac{2n}{t^{\frac{1}{2}}} + \frac{2m}{s^{\frac{1}{2}}}, x_3 + \frac{2n}{3 t^{\frac{3}{2}}} + \frac{2m}{3 s^{\frac{3}{2}}}, \ldots, \\
&\hspace{1cm} x_{2j+1} + \frac{2n}{(2j+1) t^{\frac{2j+1}{2}}} + \frac{2m}{(2j+1) s^{\frac{2j+1}{2}}}, \ldots  \bigg) . 
\end{align*}
Because of this it has become known as the \emph{generating PDE} \cite{lobb2009lagrangian,xenitidis2011lagrangian} for the SKdV hierarchy. Renaming the parameters $t = \lambda_1^2$ and $s = \lambda_2^2$ we obtain once again the odd order Miwa shifts
\begin{align*}
\mathtt{z}_{n,m} &= v \bigg( x_1 + \frac{2n}{\lambda_1} + \frac{2m}{\lambda_2}, x_3 + \frac{2n}{3 \lambda_1^3} + \frac{2m}{3 \lambda_2^3}, \ldots, \\
 &\hspace{1cm} x_{2j+1} + \frac{2n}{(2j+1) \lambda_1^{2j+1}} + \frac{2m}{(2j+1) \lambda_2^{2j+1}}, \ldots  \bigg) .
\end{align*}
Hence our continuum limit of Q1$_{\delta = 0}$ is implicitly present in \cite{nijhoff2000schwarzian}. The relation between the (non-autonomous) generating PDE, the quad equation, and the hierarchy of (autonomous) PDEs is illustrated in Figure \ref{fig-diagram}.

\begin{figure}[t]
\begin{tikzcd}[row sep = huge,cells={nodes={draw}}]
& \text{Generating PDE for } \mathtt{z}_{n,m}(t,s)
\arrow[dl, "\text{\normalsize role reversal $(n,m) \leftrightarrow (t,s)$} " description, end anchor=north east]
 \arrow[dr, "\text{\normalsize Miwa expansion}" description, end anchor=north west] &\\
\text{Quad\ equation \eqref{zquad}} \arrow[rr, "\text{\normalsize continuum limit}" description] & & \text{Hierarchy \eqref{SKdVhier}}
\end{tikzcd}
\caption{The continuum limit is compatible with the relations found in \cite{nijhoff2000schwarzian}}
\label{fig-diagram}
\end{figure}

\section{Conclusion}

We have presented a method to perform continuum limits of discrete pluri-Lagrangian systems. In this approach, a single (parameter-dependent) discrete equation produces a full hierarchy of differential equations, and the pluri-Lagrangian structure is carried over from the discrete system to the continuous one. The continuum limit produces 2-forms whose coefficients contain alien derivatives, but in our main examples these could be eliminated afterwards. Whether there exists a different continuum limit procedure that immediately delivers a native result, is an open question.

Although our method of taking the continuum limit can be stated in a general way, it can only be executed if we can find a form of the discrete equation and its Lagrangian that allows a suitable Taylor expansion in the parameters. Finding such a form is a non-trivial task. We solved this on a case-by-case basis for a few important examples. In a subsequent paper \cite{vermeeren2018variational} we will discuss many more examples, including all ABS equations of type Q.

\appendix
\section{The variational bicomplex}
\label{appendix}

This is a minimal introduction to the variational bicomplex. Much more on this topic can be found for example in \cite[Chapter 19]{dickey2003soliton}.

Our starting point is the idea that the exterior derivative can be split into a vertical part $\delta$ and a horizontal part $\d$. An $(a,b)$-form is a differential $(a+b)$-form structured as
\[ \omega^{a,b} = \sum f_{I_1,\ldots,I_a}([u]) \, \delta u_{I_1} \wedge \ldots \wedge \delta u_{I_a} \wedge \d t_{j_1} \ldots \wedge \d t_{j_b} . \]
Some authors use contact forms instead of the $\delta u_I$, see for example \cite{anderson1992introduction}. We denote the space of $(a,b)$-forms by $\Omega^{(a,b)} \subset \Omega^{a+b}$. These spaces are related to each other by $\d$ and $\delta$ as in the following diagram:
\[
\begin{tikzcd}
\vdots
&\vdots
&
&\vdots
&\vdots
\\
\Omega^{(2,0)} \arrow[r, "\d"] \arrow[u, "\delta"]
& \Omega^{(2,1)} \arrow[r, "\d"] \arrow[u, "\delta"]
& \ldots \arrow[r, "\d"]
& \Omega^{(2,n-1)} \arrow[r, "\d"] \arrow[u, "\delta"]
& \Omega^{(2,n)} \arrow[u, "\delta"]
\\
\Omega^{(1,0)} \arrow[r, "\d"] \arrow[u, "\delta"]
& \Omega^{(1,1)} \arrow[r, "\d"] \arrow[u, "\delta"]
& \ldots \arrow[r, "\d"]
& \Omega^{(1,n-1)} \arrow[r, "\d"] \arrow[u, "\delta"]
& \Omega^{(1,n)} \arrow[u, "\delta"]
\\
\Omega^{(0,0)} \arrow[r, "\d"] \arrow[u, "\delta"]
& \Omega^{(0,1)} \arrow[r, "\d"] \arrow[u, "\delta"]
& \ldots \arrow[r, "\d"]
& \Omega^{(0,n-1)} \arrow[r, "\d"] \arrow[u, "\delta"]
& \Omega^{(0,n)} \arrow[u, "\delta"]
\end{tikzcd}
\]

The horizontal and vertical exterior derivatives are characterized by the anti-derivation property,
\begin{align*}
\d \left( \omega_1^{p_1,q_1} \wedge \omega_2^{p_2,q_2} \right)
&= \d \omega_1^{p_1,q_1} \wedge \omega_2^{p_2,q_2} + (-1)^{p_1+q_1} \, \omega_1^{p_1,q_1} \wedge \d \omega_2^{p_2,q_2}  ,
\\
\delta \left( \omega_1^{p_1,q_1} \wedge \omega_2^{p_2,q_2} \right) 
&= \delta \omega_1^{p_1,q_1} \wedge \omega_2^{p_2,q_2} + (-1)^{p_1+q_1} \, \omega_1^{p_1,q_1} \wedge \delta \omega_2^{p_2,q_2} ,
\end{align*}
and by the way they act on $(0,0)$-forms, and basic $(1,0)$ and $(0,1)$-forms:
\[
\begin{tabular}{lcl}
$\displaystyle \d f = \sum_j (\D_{t_j} f) \,\d t_j $
&&
$\displaystyle \delta f = \sum_I \der{f}{u_I} \delta u_{I}$
\\
$\displaystyle \d (\delta u_I )= - \sum_j \delta u_{Ij} \wedge \d t_j$
& \hspace{2cm} &
$\delta(\delta u_I) = 0$
\\
$\d(\d t_j) = 0$ 
&&
$\delta (\d t_j) = 0$
\end{tabular}
\]
One can verify that $\d + \delta: \Omega^a \rightarrow \Omega^{a+1}$ is the usual exterior derivative and that
\[ \delta^2 = \d^2 = \delta \d + \d \delta = 0 . \]
Furthermore, for any (generalized) vector field $V$, there holds
\[ \iota_{V} \d + \d \iota_{V} = 0. \]
The derivative $\D_{t_i}$ acts on elementary $(1,0)$ forms as $\D_{t_i} \delta u_I = \delta u_{It_i}$ and satisfies the Leibniz rule with respect to $\wedge$.

The variational principle $\delta \int_\Gamma \cL = 0$, for an $(0,d)$-form $\cL = \sum L_{1 \ldots d} \, \d t_1 \wedge \ldots \wedge \d t_d$, can be understood as follows. For every vertical vector field
$ V = v \frac{\partial}{\partial u} $
that vanishes on the boundary $\partial \Gamma$, its prolongation
\[ \pr V = \sum_I (\D_I v) \frac{\partial}{\partial u_I} \]
must satisfy
\[ \int_\Gamma \iota_{\pr V} \delta \cL = \int_\Gamma \sum \iota_{\pr V} \delta L_{1 \ldots d} \wedge \d t_1 \wedge \ldots \wedge \d t_d = 0 . \]

An important property of classical Lagrangian systems is that changing the Lagrangian by a full derivative (or divergence) does not affect the Euler-Lagrange equations. The following Proposition is the pluri-Lagrangian generalization of this property.

\begin{prop}
\label{prop-delta=d}
The field $u$ is a critical point of the action $\int_\Gamma \cL[u]$ for all $\Gamma$ if there exists a $(1,d-1)$-form $\Theta[u]$ such that $\delta \cL = \d \Theta$.
\end{prop}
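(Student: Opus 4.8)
The plan is to deduce the statement directly from the characterization of criticality recorded above: a field $u$ solves the pluri-Lagrangian problem for $\cL$ exactly when, for every $d$-surface $\Gamma$ and every vertical vector field $V = v\,\partial/\partial u$ that vanishes near $\partial\Gamma$, its prolongation $\pr V$ satisfies $\int_\Gamma \iota_{\pr V}\,\delta\cL = 0$. Under the hypothesis $\delta\cL = \d\Theta$ the integrand becomes $\iota_{\pr V}\,\d\Theta$, and from there the argument is a short manipulation with the two formal identities collected in the appendix: the anticommutation $\iota_{\pr V}\,\d + \d\,\iota_{\pr V} = 0$ for (prolonged, vertical) vector fields, and the fact that a horizontal form pulled back along the prolongation of a section becomes an ordinary form, with $\d$ turning into the de Rham differential, so that Stokes' theorem applies on $\Gamma$.

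Concretely I would proceed as follows. Fix $\Gamma$ and a vertical field $V = v\,\partial/\partial u$ whose coefficient $v$, hence every coefficient $\D_I v$ of $\pr V$, vanishes on a neighbourhood of $\partial\Gamma$. Using the hypothesis and the anticommutation relation,
\[ \iota_{\pr V}\,\delta\cL \;=\; \iota_{\pr V}\,\d\Theta \;=\; -\,\d\bigl(\iota_{\pr V}\,\Theta\bigr), \]
where $\iota_{\pr V}\,\Theta$ is a $(0,d-1)$-form. Restricting all forms to the prolongation of $u$ and applying Stokes' theorem on $\Gamma$ gives
\[ \int_\Gamma \iota_{\pr V}\,\delta\cL \;=\; -\int_\Gamma \d\bigl(\iota_{\pr V}\,\Theta\bigr) \;=\; -\int_{\partial\Gamma} \iota_{\pr V}\,\Theta \;=\; 0, \]
the last equality because $\pr V$, and therefore $\iota_{\pr V}\,\Theta$, vanishes near $\partial\Gamma$. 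Since $\Gamma$ and $V$ were arbitrary, $u$ is critical for the pluri-Lagrangian problem defined by $\cL$.

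The step I expect to require the most care is the passage from the algebraic identity $\delta\cL = \d\Theta$ on the jet bundle to an honest application of Stokes' theorem on $\Gamma$: one must first pull everything back along the prolongation $\pr u$, under which the horizontal differential becomes the de Rham differential, and only then integrate and integrate by parts. This localization has the pleasant side effect that it suffices for the identity $\delta\cL = \d\Theta$ to hold \emph{along} $\pr u$ --- for instance, modulo the multi-time Euler-Lagrange equations of $\cL$ --- which is exactly the situation in which the statement is used in the proof of Theorem \ref{thm-alien}. Everything else reduces to bookkeeping of the sign conventions for contracting a prolonged vertical field into a $(1,d)$-form, for which the rules listed in the appendix are sufficient.
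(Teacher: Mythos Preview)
Your argument is correct and follows essentially the same route as the paper: use the anticommutation $\iota_{\pr V}\,\d + \d\,\iota_{\pr V} = 0$ to write $\iota_{\pr V}\,\delta\cL = -\,\d(\iota_{\pr V}\,\Theta)$, then apply Stokes' theorem and the vanishing of $V$ near $\partial\Gamma$. Your additional remark that it suffices for $\delta\cL = \d\Theta$ to hold along $\pr u$ is a useful observation, and indeed is how the proposition is applied in the proof of Theorem~\ref{thm-alien}.
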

\begin{proof}
Since the horizontal exterior derivative $\d$ anti-commutes with the interior product operator $\iota_{\pr V}$ for the prolongation of a vertical vector field $V$, it follows that for any variation $V$ of the field $u$ that is zero on the boundary of a manifold $\Gamma$:
\[ \int_\Gamma \iota_{\pr V} \delta \cL = -\int_\Gamma \d \left( \iota_{\pr V} \Theta \right) = -\int_{\partial \Gamma} \iota_{\pr V} \Theta = 0 . \qedhere \]
\end{proof}

\subsection*{Acknowledgments}

This research was supported by the DFG Collaborative Research Center TRR 109, `Discretization in Geometry and Dynamics.'

\bibliographystyle{abbrvnat_jis}
\bibliography{clim}

\begin{thebibliography}{30}
\providecommand{\natexlab}[1]{#1}
\providecommand{\url}[1]{\texttt{#1}}
\expandafter\ifx\csname urlstyle\endcsname\relax
  \providecommand{\doi}[1]{doi: #1}\else
  \providecommand{\doi}{doi: \begingroup \urlstyle{rm}\Url}\fi

\bibitem[Abramowitz and Stegun(1964)]{abramowitz1964handbook}
Abramowitz~M. \& Stegun~I.~A. (1964) \emph{Handbook of mathematical functions:
  with formulas, graphs, and mathematical tables}.
\newblock National Bureau of Standards.

\bibitem[Adler et~al.(2003)Adler, Bobenko, and Suris]{adler2003classification}
Adler~V.~E., Bobenko~A.~I. \& Suris~{\relax Yu}.~B. (2003) Classification of
  integrable equations on quad-graphs. the consistency approach.
\newblock
  \href{http://dx.doi.org/10.1007/s00220-002-0762-8}{\penalty0\emph{Commun.
  Math. Phys}\penalty0, 233, 513--543}.

\bibitem[Anderson(1992)]{anderson1992introduction}
Anderson~I.~M. (1992).
\newblock Introduction to the variational bicomplex.
\newblock In Gotay~M., Marsden~J. \& Moncrief~V., editors, \emph{Mathematical
  Aspects of Classical Field Theory}, pages 51--73. AMS.

\bibitem[Bobenko and Suris(2002)]{bobenko2002integrable}
Bobenko~A.~I. \& Suris~{\relax Yu}.~B. (2002) Integrable systems on
  quad-graphs.
\newblock
  \href{http://dx.doi.org/10.1155/S1073792802110075}{\penalty0\emph{Int. Math.
  Res. Notices}\penalty0, 2002, 573--611}.

\bibitem[Bobenko and Suris(2015)]{bobenko2015discrete}
Bobenko~A.~I. \& Suris~{\relax Yu}.~B. (2015) Discrete pluriharmonic functions
  as solutions of linear pluri-{L}agrangian systems.
\newblock
  \href{http://dx.doi.org/10.1007/s00220-014-2240-5}{\penalty0\emph{Commun.
  Math. Phys}\penalty0, 336, 199--215}.

\bibitem[Boll et~al.(2013)Boll, Petrera, and Suris]{boll2013multi}
Boll~R., Petrera~M. \& Suris~{\relax Yu}.~B. (2013) Multi-time {L}agrangian
  1-forms for families of {B}{\"a}cklund transformations: Toda-type systems.
\newblock
  \href{http://dx.doi.org/10.1088/1751-8113/46/27/275204}{\penalty0\emph{J.
  Phys. A}\penalty0, 46, 275204}.

\bibitem[Boll et~al.(2014)Boll, Petrera, and Suris]{boll2014integrability}
Boll~R., Petrera~M. \& Suris~{\relax Yu}.~B. (2014) What is integrability of
  discrete variational systems?
\newblock \href{http://dx.doi.org/10.1098/rspa.2013.0550}{\penalty0\emph{Proc.
  Royal Soc. Lond. A}\penalty0, 470, 20130550}.

\bibitem[Capel et~al.(1991)Capel, Nijhoff, and Papageorgiou]{capel1991complete}
Capel~H.~W., Nijhoff~F. \& Papageorgiou~V. (1991) Complete integrability of
  {L}agrangian mappings and lattices of {KdV} type.
\newblock
  \href{http://dx.doi.org/10.1016/0375-9601(91)91043-D}{\penalty0\emph{Phys.
  Lett. A}\penalty0, 155, 377--387}.

\bibitem[Dickey(2003)]{dickey2003soliton}
Dickey~L.~A. (2003) \emph{Soliton equations and hamiltonian systems}.
\newblock World Scientific. 2nd edition.

\bibitem[Harris(1992)]{harris1992algebraic}
Harris~J. (1992) \emph{Algebraic geometry: A first course}.
\newblock Springer Science \& Business Media.

\bibitem[Hietarinta et~al.(2016)Hietarinta, Joshi, and
  Nijhoff]{hietarinta2016discrete}
Hietarinta~J., Joshi~N. \& Nijhoff~F.~W. (2016) \emph{Discrete systems and
  integrability}.
\newblock Cambridge University Press.

\bibitem[King and Nijhoff(2017)]{king2017quantum}
King~S.~D. \& Nijhoff~F.~W. (2017) Quantum variational principle and quantum
  multiform structure: the case of quadratic {L}agrangians.
\newblock
  \penalty0\emph{\href{https://arxiv.org/abs/1702.08709}{arXiv:1702.08709}}\penalty0,
  .

\bibitem[Lobb and Nijhoff(2009)]{lobb2009lagrangian}
Lobb~S. \& Nijhoff~F. (2009) Lagrangian multiforms and multidimensional
  consistency.
\newblock
  \href{http://dx.doi.org/10.1088/1751-8113/42/45/454013}{\penalty0\emph{J.
  Phys. A}\penalty0, 42, 454013}.

\bibitem[Miwa(1982)]{miwa1982hirota}
Miwa~T. (1982) On {H}irota's difference equations.
\newblock \href{http://dx.doi.org/10.3792/pjaa.58.9}{\penalty0\emph{Proc. Japan
  Acad. Ser. A Math. Sci.}\penalty0, 58, 9--12}.

\bibitem[Morosi and Pizzocchero(1996)]{morosi1996continuous}
Morosi~C. \& Pizzocchero~L. (1996) On the continuous limit of integrable
  lattices {I}. {T}he {K}ac-{M}oerbeke system and {KdV} theory.
\newblock \href{http://dx.doi.org/10.1007/BF02099723}{\penalty0\emph{Commun.
  Math. Phys}\penalty0, 180, 505--528}.

\bibitem[Morosi and Pizzocchero(1998{\natexlab{a}})]{morosi1998continuous}
Morosi~C. \& Pizzocchero~L. (1998){\natexlab{a}} On the continuous limit of
  integrable lattices {II}. {V}olterra systems and $\mathfrak{sp}({N})$
  theories.
\newblock
  \href{http://dx.doi.org/10.1142/S0129055X98000070}{\penalty0\emph{Rev. Math.
  Phys}\penalty0, 10, 235--270}.

\bibitem[Morosi and Pizzocchero(1998{\natexlab{b}})]{morosi1998continuous2}
Morosi~C. \& Pizzocchero~L. (1998){\natexlab{b}} On the continuous limit of
  integrable lattices {III}. {K}upershmidt systems and {KdV} theories.
\newblock
  \href{http://dx.doi.org/10.1088/0305-4470/31/11/018}{\penalty0\emph{J. Phys.
  A}\penalty0, 31, 2727}.

\bibitem[Nijhoff and Capel(1995)]{nijhoff1995discrete}
Nijhoff~F. \& Capel~H. (1995) The discrete {K}orteweg-de {V}ries equation.
\newblock \href{http://dx.doi.org/10.1007/BF00994631}{\penalty0\emph{Acta Appl.
  Math}\penalty0, 39, 133--158}.

\bibitem[Nijhoff et~al.(1983)Nijhoff, Quispel, and Capel]{nijhoff1983direct}
Nijhoff~F., Quispel~G. \& Capel~H. (1983) Direct linearization of nonlinear
  difference-difference equations.
\newblock
  \href{http://dx.doi.org/10.1016/0375-9601(83)90192-5}{\penalty0\emph{Phys.
  Lett. A}\penalty0, 97, 125--128}.

\bibitem[Nijhoff et~al.(2000)Nijhoff, Hone, and Joshi]{nijhoff2000schwarzian}
Nijhoff~F., Hone~A. \& Joshi~N. (2000) On a schwarzian {PDE} associated with
  the {KdV} hierarchy.
\newblock
  \href{http://dx.doi.org/10.1016/S0375-9601(00)00063-3}{\penalty0\emph{Phys.
  Lett. A}\penalty0, 267, 147--156}.

\bibitem[Petrera and Suris(2017)]{petrera2017variational}
Petrera~M. \& Suris~Y.~B. (2017) Variational symmetries and pluri-{L}agrangian
  systems in classical mechanics.
\newblock
  \href{http://dx.doi.org/10.1080/14029251.2017.1418058}{\penalty0\emph{Journal
  of Nonlinear Mathematical Physics}\penalty0, 24, 121--145}.

\bibitem[Quispel et~al.(1984)Quispel, Nijhoff, Capel, and Van~der
  Linden]{quispel1984linear}
Quispel~G., Nijhoff~F., Capel~H. \& Van~der Linden~J. (1984) Linear integral
  equations and nonlinear difference-difference equations.
\newblock
  \href{http://dx.doi.org/10.1016/0378-4371(84)90059-1}{\penalty0\emph{Physica
  A}\penalty0, 125, 344--380}.

\bibitem[Suris(2003)]{suris2003problem}
Suris~{\relax Yu}.~B. (2003) \emph{The problem of integrable discretization:
  Hamiltonian approach}.
\newblock Birkh{\"a}user.

\bibitem[Suris and Vermeeren(2016)]{suris2016lagrangian}
Suris~{\relax Yu}.~B. \& Vermeeren~M. (2016).
\newblock On the {L}agrangian structure of integrable hierarchies.
\newblock \href{http://dx.doi.org/10.1007/978-3-662-50447-5_11}{In
  Bobenko~A.~I., editor, \emph{Advances in Discrete Differential Geometry},
  pages 347--378}. Springer.

\bibitem[{The Sage Developers}(2017)]{sagemath}
{The Sage Developers}.
\newblock {S}age{M}ath, the {S}age {M}athematics {S}oftware {S}ystem ({V}ersion
  7.5.1) (2017).
\newblock \url{http://www.sagemath.org}.

\bibitem[Vermeeren(2017)]{vermeeren2015modified}
Vermeeren~M. (2017) Modified equations for variational integrators.
\newblock
  \href{http://dx.doi.org/10.1007/s00211-017-0896-4}{\penalty0\emph{Numer.
  Math}\penalty0, 137, 1001--1037}.

\bibitem[Vermeeren(2018)]{vermeeren2018variational}
Vermeeren~M. (2018) A variational perspective on continuum limits of {ABS} and
  lattice {GD} equations.
\newblock
  \penalty0\emph{\href{https://arxiv.org/abs/1811.01855}{arXiv:1811.01855}}\penalty0,
  .

\bibitem[Wiersma and Capel(1987)]{wiersma1987lattice}
Wiersma~G.~L. \& Capel~H.~W. (1987) Lattice equations, hierarchies and
  {H}amiltonian structures.
\newblock
  \href{http://dx.doi.org/10.1016/0378-4371(87)90024-0}{\penalty0\emph{Physica
  A}\penalty0, 142, 199--244}.

\bibitem[Xenitidis et~al.(2011)Xenitidis, Nijhoff, and
  Lobb]{xenitidis2011lagrangian}
Xenitidis~P., Nijhoff~F. \& Lobb~S. (2011) On the {L}agrangian formulation of
  multidimensionally consistent systems.
\newblock \href{http://dx.doi.org/10.1098/rspa.2011.0124}{\penalty0\emph{Proc.
  Royal Soc. Lond. A}\penalty0, 467, 3295--3317}.

\bibitem[Yoo-Kong et~al.(2011)Yoo-Kong, Lobb, and Nijhoff]{yoo2011discrete}
Yoo-Kong~S., Lobb~S. \& Nijhoff~F. (2011) Discrete-time {C}alogero-{M}oser
  system and {L}agrangian 1-form structure.
\newblock
  \href{http://dx.doi.org/10.1088/1751-8113/44/36/365203}{\penalty0\emph{J.
  Phys. A}\penalty0, 44, 365203}.

\end{thebibliography}

\end{document}